\def\manifold{\mathcal{M}}
\def\reals{\mathbb{R}}
\def\comps{\mathbb{C}}
\def\sphere{\mathbb{S}}
\def\vol{\mbox{vol}}
\def\diag{\mbox{diag}}
\newcommand{\Tan}[1]{\mathcal{T}_{#1}\manifold}
\newcommand{\cover}[3]{\mathcal{C}\left(#1,\; #2, \; #3 \right)}
\newtheorem{definition}{Definition}[section]
\newtheorem{lemma}{Lemma}[section]
\newtheorem{thm}{Theorem}[section]
\newtheorem{cor}{Corollary}[section]
\begin{document}

\title{Stable Manifold Embeddings with\\ Structured Random Matrices}
\date{}
\author{Han Lun Yap, Michael B. Wakin, and Christopher J. Rozell\thanks{
Copyright (c) 2013 IEEE. Personal use of this material is permitted. However, permission to use this material for any other purposes must be obtained from the IEEE by sending a request to pubs-permissions@ieee.org.

HLY and CJR are with the School of Electrical and Computer Engineering at the Georgia Institute of Technology. MBW is with the Department of Electrical Engineering and Computer Science at the Colorado School of Mines. This work was partially supported by NSF grants CCF-0830456 and CCF-0830320, by NSF CAREER grant CCF-1149225, and by DSO National Laboratories of Singapore. The authors are grateful to A. Eftekhari for valuable discussions about this work. A preliminary version of this work appeared in the Proceedings of the 45th Annual Conference on Information Sciences and Systems~\cite{Yap2011d}.}}

\maketitle

\begin{abstract}
The fields of compressed sensing (CS) and matrix completion have shown that high-dimensional signals with sparse or low-rank structure can be effectively projected into a low-dimensional space (for efficient acquisition or processing) when the projection operator achieves a stable embedding of the data by satisfying the Restricted Isometry Property (RIP). It has also been shown that such stable embeddings can be achieved for general Riemannian submanifolds when random orthoprojectors are used for dimensionality reduction. Due to computational costs and system constraints, the CS community has recently explored the RIP for structured random matrices (e.g., random convolutions, localized measurements, deterministic constructions).  The main contribution of this paper is to show that any matrix satisfying the RIP (i.e., providing a stable embedding for sparse signals) can be used to construct a stable embedding for manifold-modeled signals by randomizing the column signs and paying reasonable additional factors in the number of measurements, thereby generalizing previous stable manifold embedding results beyond unstructured random matrices.   We demonstrate this result with several new constructions for stable manifold embeddings using structured matrices. This result allows advances in efficient projection schemes for sparse signals to be immediately applied to manifold signal models.

\end{abstract}

\section{Introduction}
\label{sec:intro}

Much of modern signal processing rests on the observation that many high-dimensional signals of interest in fact have an intrinsic low-dimensional structure that can be described with a geometric model.
For example, sparse signals live on a union of low-dimensional subspaces within an ambient high-dimensional signal space~\cite{Baraniuk2008a}, while parametric signals and certain non-parametric signal collections are constrained to live on (or near) low-dimensional manifolds~\cite{DonohoGrimesISOMAP,Eigenfaces}. While this low-dimensional structure can be exploited to great effect in signal processing applications, the high-dimensionality of the ambient signal space can severely complicate the acquisition and processing of the data~\cite{Baraniuk2011}. 
To partially address this issue, several recent results have shown that compressive linear operators can provide stable embeddings that preserve the geometry of the signal model (i.e., preserve pairwise points between signals) in a lower-dimensional space.

Much of the work on compressive linear operators has come in the field of compressed sensing (CS), where it is known that certain randomized compressive matrix constructions will satisfy a condition known as the Restricted Isometry Property (RIP)~\cite{Candes2006} with high probability.
The RIP guarantees that a matrix will approximately preserve distances between all pairs of sparse signals, therefore stably embedding the signal model by preserving the geometric structure of the union of subspaces in the compressed (i.e., measurement) space. The RIP is a sufficient condition to guarantee robust recovery of sparse signals from their measurements via solving a computationally tractable $\ell_1$-minimization program.
In a similar vein, an equivalent formulation of the RIP for preserving distances between low-rank matrices also leads to matrix recovery guarantees from underdetermined linear measurements~\cite{Recht2007}.

The notion of a stable embedding, as quantified in the RIP, has also been extended to describe linear operators acting on signals living on a low-dimensional manifold~\cite{baraniuk2009random,Clarkson2008a}.
For example, it has been shown that an undersampled random orthoprojector can stably embed a manifold from a high-dimensional space into a lower-dimensional space~\cite{baraniuk2009random,Clarkson2008a}. 
Such stable embeddings are valuable because they ensure that key properties of the manifold are retained in the low-dimensional measurement space where processing is much more computationally efficient.
In particular, a stable embedding is a sufficient condition for guarantees on our ability to recover the original signal via tractable recovery algorithms~\cite{Shah2011} and for performance guarantees on data processing or inference algorithms in the measurement space~\cite{Davenport2010a}.
Moreover, a stable embedding also guarantees that manifold learning algorithms (e.g., Isomap~\cite{ISOMAP}) can be applied in the low-dimensional measurement space nearly as accurately as in the original signal space~\cite{hegde2007random}.

Recently, the CS community has turned to investigating structured measurement systems because unstructured systems (i.e., those corresponding to i.i.d.\ random matrices or random orthoprojectors that are classically analyzed in the CS literature) may be impractical due to memory constraints, computational costs, or limitations in the data acquisition architecture.
Several structured CS systems (e.g., random convolution systems described by partial Toeplitz~\cite{Haupt2010} and circulant matrices~\cite{Rauhut2010c,Krahmer2012}, localized sensing systems described by randomized block diagonal matrices~\cite{Yap2011a}, and certain deterministic matrix constructions~\cite{Devore2007}) have been shown to satisfy the RIP while requiring (at least analytically) a small increase in the number of measurements beyond what is needed for an unstructured random matrix. While concerns about the practicality of unstructured measurements also apply to systems acquiring manifold-modeled signals, the existing stable embedding results for structured matrices apply only to sparse signal models.

The main contribution of this paper is to demonstrate that \emph{any} matrix satisfying the RIP for sparse signals (including structured measurement systems as described above) can be used to generate a stable embedding of a manifold by randomizing the column signs of the matrix.  Our main theorem statement gives an explicit recipe for using the RIP guarantee of a matrix to determine the number of measurements sufficient to guarantee (with a prescribed probability) a stable manifold embedding of a specified conditioning.
Thus, the main goal of this paper is to generalize the existing stable manifold embedding results~\cite{baraniuk2009random,Clarkson2008a} for unstructured matrices by paying a reasonable penalty in the number of measurements to accommodate any matrix for which the RIP is established.  As practical examples, we compute the number of measurements sufficient for stable manifold embeddings when using measurement systems constructed from several structured matrices studied in the CS literature, including subsampled Fourier transforms, random convolution matrices, block diagonal matrices, and certain deterministic matrices.  
We show that for many structured matrices of interest, it suffices to have a number of measurements that scales linearly with the dimension of the manifold and logarithmically with properties of the manifold (to be described in detail in Section~\ref{subsec:manifold_signals}) and the ambient dimension. 
Our work rests on a recent result~\cite{Krahmer2010a} showing that when the columns of an RIP matrix are modulated by a random sign sequence, the matrix will obey a form of the Johnson-Lindenstrauss (JL) lemma~\cite{Johnson84extensionsof} and can therefore provide a stable embedding of an arbitrary finite point cloud. 
Following similar arguments to~\cite{baraniuk2009random}, we extend the finite JL result to all points living on a manifold.

\section{Background}
\label{sec:bg}

\subsection{Stable Embeddings}

When $M < N$, a compressive linear operator $\Phi \in \reals^{M \times N}$
possesses a nullspace of dimension at least $N - M$.
Therefore, distinct signals may be mapped onto, or close to, the same measurement by the operator if their difference falls on or near its nullspace.  In any application with finite resolution or noise, instability can result if very different signals are mapped close together.  It is therefore critical that the geometry of the subset $\manifold \subset \reals^N$ of signals of interest be maintained in the measurement space $\reals^M$.
This geometry preservation idea forms the basis for the following definition of a stable embedding by an operator:
\begin{definition}
	\label{def:stable_emb_def}
	A linear operator ${\Phi}$ provides a \emph{stable embedding} of a subset $\manifold \subset \reals^N$ with conditioning $\delta_\manifold$ if for all pairs $x_1, x_2 \in \manifold$, we have
	\begin{eqnarray}
		\label{eq:SME_def1}
		(1-\delta_\manifold) \le \frac{\|{\Phi} x_1  - {\Phi} x_2 \|_2^2}{\|x_1 - x_2\|_2^2} \le (1+\delta_\manifold).
	\end{eqnarray}
\end{definition}

For a finite data cloud $\manifold$ (i.e., $|\manifold| < \infty$), a stable embedding is established by the \emph{Johnson-Lindenstrauss} (JL) lemma~\cite{Johnson84extensionsof}.
For many random operators $\Phi$~\cite{Krahmer2010a,dasgupta2002elementary}, the JL lemma states that for a stable embedding of the set $\manifold$ to hold with high probability, the number of measurements need only  scale with $\log(|\manifold|)$ and not with the size of the ambient signal space. 
In contrast, in CS the set $\manifold$ is comprised of all $S$-sparse vectors, $\manifold := \{x \in \reals^N \;|\; \|x\|_0 \le S \}$, where $\|x\|_0$ counts the number of non-zero entries in $x$.  For this signal family, the notion of a stable embedding is given by the RIP, defined as:
\begin{definition}
	\label{def:RIP_def}
	A linear operator $\Phi \in \reals^{M \times N}$ satisfies the \emph{Restricted Isometry Property} of \emph{order} $S$ and \emph{conditioning} $\delta$ (or RIP-($S,\delta$) in short) if for all $x \in \reals^N$ with at most $S$ non-zero entries, we have
	\begin{eqnarray*}
		(1-\delta)\|x\|_2^2 \le \|\Phi x\|_2^2 \le (1+\delta)\|x\|_2^2.
	\end{eqnarray*}
\end{definition}
Because the difference between $S$-sparse signals is at most $2S$-sparse, an operator satisfying RIP-($2S, \delta$) provides a stable embedding with conditioning $\delta$ of the union of all $S$-sparse subspaces of $\reals^N$.

\subsection{Manifold-modeled Signals}
\label{subsec:manifold_signals}

The sparsity and low-rank signal models that have gained significant attention in the signal processing community do not apply well to all signal families.
Instead, many high-dimensional signals can be modeled as lying on (or near) low-dimensional submanifolds embedded in Euclidean space.
One example class of such signals are parametric signals that are determined by a parameter $\theta \in \Theta$, where $\Theta$ is a $D$-dimensional (typically $D \ll N$) parameter space (which could be a $D$-dimensional manifold itself or simply a subset of $\reals^D$).
More precisely, a parametric signal class can be written as $\manifold := \{ x \in \reals^N \;|\; x = f(\theta), \theta \in \Theta \}$ where $f : \Theta \rightarrow \reals^N$ is a smooth function.\footnote{For $\manifold$ to be a Riemannian submanifold as required by our main result, additional conditions on the function $f$ may be necessary (e.g., $f$ should be a diffeomorphism).}
Examples of such parametric signals include a 1-dimensional signal parameterized by a time delay ($D=1$), a radar chirp characterized by its starting and ending time and frequency ($D=4$), and images of an object articulated in space~\cite{DonohoGrimesISOMAP}.
Not all manifold-modeled signals of interest can be parametrized. 
Nonetheless, low-dimensional submanifolds have also been useful as approximate models for nonparametric signal classes such as images of human faces~\cite{Eigenfaces} or hand-written digits~\cite{Hinton1997}.
We refer the reader to~\cite{Wakin} for further examples of interesting signal families that are well-modeled by low-dimensional submanifolds of Euclidean space.

Before discussing the stable embedding of manifolds, we establish some necessary notation and terminology.  In the remainder of this paper, we consider
$\manifold$ to be a Riemannian submanifold that inherits the canonical Euclidean metric from the ambient space.
For a given point $x$ on $\manifold$ embedded in $\reals^N$, we let $\Tan{x}$ denote the tangent space of $\manifold$ at $x$. 
As we are considering submanifolds $\manifold$ of dimension $D$ embedded in $\reals^N$, $\Tan{x}$ can be defined as a $D$-dimensional linear subspace of $\reals^N$ passing through the origin. 
We let $d_\manifold(x,y)$ denote the geodesic distance between two points $x,y \in \manifold$  (i.e., the length of the shortest path between $x$ and $y$ along the submanifold).

In this work, we consider two additional characterizations of a manifold that will be useful for describing certain local and global properties of the manifold.  
The first is the \emph{condition number} which provides a bound on the worst case curvature of any unit speed geodesic path along the manifold and a guarantee that the manifold is ``self-avoiding'' in that it does not curve back on itself at long geodesic distances.
The condition number, described in \cite{Niyogi2009} and used in~\cite{baraniuk2009random}, is typically denoted by the fraction $\frac{1}{\tau}$. 
Appendix~\ref{subsec:consequence_cond_num} describes in greater detail the implications of the condition number that are important in our proofs.  

A second useful quantity is the \emph{geodesic regularity} of a manifold.
Let $\vol(B)$ denote the volume of a set $B$, defined as $\vol(B) = \int_B dv$ where $dv$ is the volume element on $B$.
Next, for a Riemannian manifold $\manifold$, denote $B_\manifold(x, \epsilon)$ as the geodesic ball centered at $x \in \manifold$ of radius $\epsilon$, $B_\manifold(x, \epsilon) := \{ p \in \manifold \;|\; d_\manifold(p,x) \le \epsilon\}$.
Similarly, let $B_{\reals^D}(x, \epsilon)$ be the Euclidean ball of radius $\epsilon$ centered at $x \in \reals^D$, $B_{\reals^D}(x, \epsilon) := \{ p \in \reals^D \;|\; \|p - x\|_2 \le \epsilon\}$.
Then, the geodesic regularity $R$ is defined as follows:
\begin{definition}
	\label{def:geo_reg}
	A $D$-dimensional Riemannian submanifold $\manifold$ of $\reals^N$ has \emph{geodesic regularity} $R$ at resolution $\epsilon_0$ if for every $\epsilon \le \epsilon_0$ and for every $x \in \manifold$,
	\begin{eqnarray*}
		 \vol(B_{\reals^D}(0,\epsilon)) \le R^D \vol(B_\manifold(x,\epsilon)).
	\end{eqnarray*} 
\end{definition} 
We see that the geodesic regularity allows a uniform comparison of the geodesic balls and Euclidean balls (on the tangent spaces) of the same radius everywhere on the manifold.
This comparison is related to a certain intrinsic curvature (in particular, the scalar curvature) of the manifold~\cite{Gray1974}.
The consequences of the geodesic regularity $R$ on the covering numbers of a manifold (to be described later) are described in Appendix~\ref{sec:covering number}.
As in \cite{baraniuk2009random}, we shall subsequently neglect the minor dependence of the geodesic regularity $R$ on the maximum resolution $\epsilon_0$. 

As an illustration, we briefly describe a simple example of a parametric manifold and discuss its critical properties. Consider a sampled sinusoid given by
\begin{eqnarray*}
	s(\omega) = \left[e^{j\omega}, e^{j2\omega}, \cdots, e^{jN\omega} \right]^T \in \comps^N
\end{eqnarray*}
for $\omega \in [0, 2\pi)$. Observe that $\manifold := \{s(\omega) \;|\; \omega \in [0, 2\pi)\}$ is a $D=1$ dimensional submanifold in $\comps^N$ (which is isometric to $\reals^{2N}$). Lemma~\ref{lem:cond_properties} outlines the three implications of the condition number that are critical to our main results in this paper. In~\cite{Yap2013}, we show that the manifold $\manifold$ of sampled sinusoids satisfies the three properties of Lemma~\ref{lem:cond_properties} for\footnote{Notice that we are not deriving the condition number of this manifold. It suffices that the manifold satisfies the three properties of Lemma~\ref{lem:cond_properties} in order for Theorem~\ref{thm:main} to hold.}
\begin{eqnarray*}
	\frac{1}{\tau} := \frac{\sqrt{\sum_{n=1}^N n^4}}{\sum_{n=1}^N n^2},
\end{eqnarray*}
which scales like $N^{-0.5}$ for large $N$. Moreover, we also show that for this manifold, the volume 
$$
V = 2\pi \sqrt{ \sum_{n=1}^N n^2 }
$$
grows as $N^{1.5}$, and the geodesic regularity $R = 1$.

\subsection{Related Work}

The work in this paper is closely related to~\cite{baraniuk2009random} and~\cite{Clarkson2008a}, which both showed that with high probability, a random orthogonal projection $\Phi \in \reals^{M \times N}$ will provide a stable embedding of a $D$-dimensional submanifold $\manifold \subset \reals^N$ whenever $M$ scales linearly in $D$ and logarithmically in certain other parameters of the manifold.
We note that the main differences between these two works are that in~\cite{baraniuk2009random}, there is an additional dependence of $M$ on $\log(N)$, and that the manifold characterizations in both papers are slightly different.
In the present paper, we adopt the manifold characterizations presented in~\cite{baraniuk2009random}.
The proof of each of these results requires a finite covering of points carefully chosen from the manifold and a covering of the tangent planes of those points. 
Using the JL lemma previously described, it then can be argued that, with high probability, a random orthogonal projection will provide a stable embedding of these points.
Then, various geometric arguments allow one to conclude that the same orthogonal projection will provide a (slightly weaker) stable embedding of the entire manifold $\manifold$.

In this work, we adopt the same general proof approach but replace the JL lemma for random orthoprojectors with a JL lemma for operators satisfying the RIP. 
The following theorem, adapted from~\cite{Krahmer2010a}, expresses this JL lemma:
\begin{thm}
	\label{thm:Ward}
	Fix $0 < \rho, \epsilon < 1$ and suppose there is a finite set of points $E \subset \reals^N$.
	Also suppose we have a matrix $\Phi \in \reals^{M \times N}$ satisfying the RIP of order $S \ge 40 \log\left( \frac{4|E|}{\rho} \right)$ and conditioning $\delta \le \frac{\epsilon}{4}$.
	Let $\xi \in \reals^N$ be a Rademacher sequence (i.e., a sequence of i.i.d.\ equiprobable $\pm 1$ Bernoulli random variables), construct the diagonal Rademacher matrix $D_\xi := \diag(\xi)$, and define $\widehat{\Phi} := \Phi D_\xi \Psi$ where $\Psi \in \comps^{N \times N}$ is any unitary matrix.
	Then with probability exceeding $1 - \rho$,	we have for all $x \in E$, $(1 - \epsilon)\|x\|_2^2 \le \|\widehat{\Phi} x\|_2^2 \le (1 + \epsilon) \|x\|_2^2$.  
\end{thm}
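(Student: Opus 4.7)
The plan is to reduce the uniform isometry statement to a per-vector concentration inequality and then close with a union bound. Since $\Psi$ is unitary, setting $y := \Psi x$ gives $\|y\|_2 = \|x\|_2$, so it suffices to prove that for a fixed vector $y \in \reals^N$,
\[
\Proba\left( \big| \|\Phi D_\xi y\|_2^2 - \|y\|_2^2 \big| > \epsilon \|y\|_2^2 \right) \le \rho / |E|,
\]
and then take a union bound over the $|E|$ images $\{\Psi x : x \in E\}$.

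For the per-vector bound, the key observation is $\Phi D_\xi y = \Phi \diag(y) \xi$, so $\|\Phi D_\xi y\|_2^2 = \xi^T A^T A \xi$ with $A := \Phi \diag(y)$. Using $\xi_i^2 = 1$, decompose
\[
\xi^T A^T A \xi = \text{trace}(A^T A) + \xi^T B \xi, \qquad B := A^T A - \diag(A^T A).
\]
The mean equals $\sum_i y_i^2 \|\phi_i\|_2^2$ (with $\phi_i$ the columns of $\Phi$), and the $1$-sparse RIP of $\Phi$ places it within $\delta \|y\|_2^2$ of $\|y\|_2^2$. The fluctuation $\xi^T B \xi$ is a mean-zero, off-diagonal Rademacher chaos of order two; I would apply a Hanson--Wright--type tail estimate (in the Rauhut/Rudelson--Vershynin form for Rademacher chaos),
\[
\Proba\left( |\xi^T B \xi| > t \right) \le 2 \exp\!\left( -c \min\!\left( \frac{t^2}{\|B\|_F^2}, \frac{t}{\|B\|_{2 \to 2}} \right) \right),
\]
to drive the deviation below $(\epsilon - \delta)\|y\|_2^2$.

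The technical heart of the argument, and the main obstacle, is bounding $\|B\|_F$ and $\|B\|_{2 \to 2}$ in terms of $\delta$. The RIP of $\Phi$ gives direct control only on quadratic forms $u^T (\Phi^T \Phi - I) u$ when $u$ is $S$-sparse, but $y$ itself may be dense; one therefore lifts the sparse control to a bound on $B$ by decoupling the chaos (replacing one copy of $\xi$ with an independent Rademacher copy) and running a Dudley/net-of-sparse-vectors argument on the resulting bilinear form---this is essentially the Krahmer--Ward program. With these norm bounds in hand, calibrating the tail parameter against $\log(|E|/\rho)$ forces the per-vector failure probability to $\rho/|E|$, and the explicit thresholds $S \ge 40 \log(4|E|/\rho)$ and $\delta \le \epsilon/4$ emerge from tracking the absolute constants through the chaos inequality.
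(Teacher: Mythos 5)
The paper does not actually prove Theorem~\ref{thm:Ward}; it is imported (up to constants) from Krahmer and Ward~\cite{Krahmer2010a}, so your proposal has to be judged against that source. Your opening moves are correct and match it: reduce by unitarity of $\Psi$, rewrite $\Phi D_\xi y = \Phi\,\diag(y)\,\xi$ so that $\|\widehat{\Phi}x\|_2^2$ becomes a Rademacher quadratic form, split off the mean, control the mean by low-order RIP, hit the mean-zero remainder with a Hanson--Wright--type chaos tail bound, and finish with a union bound over $E$.

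The gap is in the decomposition itself. Removing only the scalar diagonal and setting $B := A^TA - \diag(A^TA)$ does not yield norm bounds strong enough for the stated constants. The best available for that $B$ from low-order RIP is $\|B\|_F \le \delta_2\|y\|_2^2$ (via $|\langle\phi_i,\phi_j\rangle|\le\delta_2$ for $i\ne j$), and plugging $t=(\epsilon-\delta)\|y\|_2^2$ into the chaos tail then forces $\delta \lesssim \epsilon/\sqrt{\log(|E|/\rho)}$ --- a materially weaker theorem than the stated $\delta\le\epsilon/4$ with $S\gtrsim\log(|E|/\rho)$. The essential idea you are missing is Krahmer--Ward's block decomposition: sort the coordinates of $y$ by decreasing magnitude into blocks of size $s\approx S/2$, absorb the \emph{entire block-diagonal} of the quadratic form into the RIP-controlled part (each block contribution $\|\Phi D_{y_{(j)}}\xi_{(j)}\|_2^2$ lies in $(1\pm\delta)\|y_{(j)}\|_2^2$ deterministically, because $D_{y_{(j)}}\xi_{(j)}$ is $s$-sparse with norm $\|y_{(j)}\|_2$), and bound the Frobenius and spectral norms of the residual off-block chaos using $\|\Phi_S^T\Phi_T\|_{2\to2}\le\delta$ for disjointly supported sparse index sets together with the rearrangement inequality $\|y_{(j)}\|_\infty\le\|y_{(j-1)}\|_2/\sqrt{s}$. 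It is this extra factor of $1/\sqrt{s}$, with $s\sim\log(|E|/\rho)$, that lets the tail bound close at deviation scale $\epsilon$ while keeping $\delta$ of order $\epsilon$. Once the block structure is in place, no decoupling-plus-Dudley or net-of-sparse-vectors argument is needed (or used) in the source; the norm bounds are elementary consequences of the RIP.
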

In words, any operator satisfying the RIP can be used to approximately preserve the norms of any orthogonal transform of the signals in a given finite point cloud when the signs of the columns of the operator are randomly chosen.
We remark that if the finite point cloud $E$ is the set of all differences between points in another finite set $\manifold \subset \reals^N$, then a matrix $\Phi \in \reals^{M \times N}$ satisfying the RIP of order $S \ge 40 \log\left( \frac{4|\manifold|^2}{\rho} \right)$ (and conditioning $\delta \le \frac{\epsilon}{4}$) in Theorem~\ref{thm:Ward} can provide a stable embedding of $\manifold$ with high probability when the column signs of $\Phi$ are randomized.

\section{Stable Manifold Embeddings}

Section~\ref{sec:manRIP} contains a statement of our main result, showing that any matrix that satisfies the RIP (i.e., provides a stable embedding for sparse signals) can be used to form a stable embedding of a manifold.  Section~\ref{sec:manStruct} illustrates how this fact can be used to form stable manifold embeddings from several structured matrices that have been shown to satisfy the RIP.

\subsection{Manifold Embeddings from RIP Operators}
\label{sec:manRIP} 

Our main contribution, showing that RIP operators can be used to form stable manifold embeddings, is captured in the following theorem:
\begin{thm}
	\label{thm:main}
Let $\manifold$ be a compact $D$-dimensional Riemannian submanifold of $\reals^N$ with geodesic regularity $R$, volume $V$, and condition number $\frac{1}{\tau}$.
Suppose $\Phi \in \reals^{M \times N}$ is a matrix that satisfies RIP-($S,\delta$), and let $D_\xi \in \reals^{N \times N}$ be a diagonal Rademacher matrix.
Denote $\widehat{\Phi} = \Phi D_\xi \Psi$, where $\Psi \in \comps^{N \times N}$ is any unitary matrix.
Choose any conditioning $\delta_\manifold < 1$ and failure probability $\rho$.
If the RIP conditioning satisfies $\delta \le \frac{\delta_\manifold}{42}$ and the order $S$ of the RIP  satisfies
\begin{eqnarray*}
	S \ge 40 \left( 2D \log \left(\frac{3528 R \left(\sqrt{D/2 + 1} \right) (N+1)^2}{\sqrt{\pi} {\delta_\manifold^2} \tau} \right)  + (2D+1)\log\left(1 + \frac{21(N+1)}{\delta_\manifold} \right) + \log \left(\frac{8 V^2}{\rho}\right) \right), 
\end{eqnarray*}
then with probability exceeding $1 - \rho$, $\widehat{\Phi}$ provides a stable embedding of $\manifold$ with conditioning $\delta_{\manifold}$. 
\end{thm}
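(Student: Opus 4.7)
The plan is to follow the blueprint of the stable manifold embedding proof in~\cite{baraniuk2009random}, but with Theorem~\ref{thm:Ward} playing the role of the JL lemma for random orthoprojectors. The basic idea is to construct a finite ``test set'' $E \subset \reals^N$ that summarizes both the coarse location of the manifold and its local tangent directions, apply Theorem~\ref{thm:Ward} to obtain (with probability $\ge 1-\rho$) a stable embedding of $E$ by $\widehat{\Phi}$, and then use geometric arguments based on the condition number $1/\tau$ and the geodesic regularity $R$ to bootstrap this finite-set embedding into a stable embedding of all of $\manifold$ with conditioning $\delta_\manifold$. Because $\Psi$ is unitary it preserves Euclidean distances and tangent-space geometry, so it may be absorbed: it suffices to prove the claim for the (possibly complex) manifold $\Psi\manifold$, which lives in an ambient space isometric to $\reals^{2N}$ (this is the source of the $(N+1)$ and $2D$ factors in the final bound).

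First I would build an $\epsilon_1$-geodesic net $A \subset \manifold$ of anchor points, with $\epsilon_1$ chosen proportional to $\tau\,\delta_\manifold^2$ so that, by the consequences of the condition number outlined in Appendix~\ref{subsec:consequence_cond_num}, any point of $\manifold$ lies close to some $a \in A$ and any chord from $x$ to $a$ is well-approximated by its projection onto $\Tan{a}$. The covering-number estimates in Appendix~\ref{sec:covering number} bound $|A|$ in terms of $V$, $R$, and $\epsilon_1$, and contribute the $\log(V^2/\rho)$ and $2D \log(R \sqrt{D/2+1}(N+1)^2/(\sqrt{\pi}\delta_\manifold^2\tau))$ terms. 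At each anchor $a$ I would attach an $\epsilon_2$-net $T_a$ of unit vectors in $\Tan{a}$, with $\epsilon_2$ proportional to $\delta_\manifold/(N+1)$; the standard volume argument for covering the unit sphere of a $2D$-dimensional subspace gives $|T_a| \le (1 + 2/\epsilon_2)^{2D}$, producing the $(2D+1)\log(1 + 21(N+1)/\delta_\manifold)$ contribution. The final test set $E$ consists of all normalized pairwise differences of anchor points together with all tangent vectors in $\bigcup_a T_a$.

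Next I would apply Theorem~\ref{thm:Ward} to $E$ with target conditioning $\epsilon = \delta_\manifold/c$ for a suitable constant $c$ (the constant $42$ eventually appearing in $\delta \le \delta_\manifold/42$ is chosen to absorb all the geometric slack below). Since $|E|$ is bounded by roughly $|A|^2 + |A|\cdot |T_a|$, plugging this into $S \ge 40\log(4|E|/\rho)$ and using $\log(a+b) \le \log(2\max(a,b))$ produces the explicit lower bound on $S$ stated in the theorem. After this step, we have with probability $\ge 1-\rho$ that $\widehat{\Phi}$ approximately preserves the norms of every difference of anchors and every tangent direction in the nets.

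The core work is then the geometric extension to arbitrary $x_1,x_2 \in \manifold$, and this is the main obstacle. I would split into two cases depending on the geodesic distance $d_\manifold(x_1,x_2)$. For the \emph{short-range} case, both $x_i$ are close to a common anchor $a$; writing $x_i - a$ as an approximately tangential vector at $a$ (using the condition-number curvature bound) and then quantizing the tangent direction against the net $T_a$ expresses $(x_1-x_2)/\|x_1-x_2\|_2$ as a near-tangent unit vector plus a small residual, so the preservation of norms on $T_a$ transfers to $x_1-x_2$. For the \emph{long-range} case, $x_1,x_2$ are close to distinct anchors $a_1,a_2$ whose separation is comparable to $\|x_1-x_2\|_2$; here the triangle inequality combined with the anchor-difference embedding and the short-range bounds on $x_i - a_i$ yields the result. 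The hard part of the whole argument is aligning the three error budgets --- the anchor quantization $\epsilon_1$, the tangent net resolution $\epsilon_2$, and the JL conditioning $\epsilon$ --- against the curvature penalty $1/\tau$ so that their combination is bounded by $\delta_\manifold$; this bookkeeping is precisely what fixes the constants $42$, $3528$, and $21$ appearing in the theorem.
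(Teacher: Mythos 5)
Your high-level blueprint (finite test set, Theorem~\ref{thm:Ward} in place of the orthoprojector JL lemma, geometric extension via the condition number) matches the paper's, but two of your concrete steps do not survive scrutiny. The most serious problem is your long-chord case. You propose to put only the normalized \emph{anchor} differences into $E$ and to handle a long chord by the triangle inequality against $(a_1-a_2)/\|a_1-a_2\|_2$. The error of that approximation is of order $\|x_i-a_i\|_2/\|x_1-x_2\|_2$, and since $\|x_i-a_i\|_2$ is only linearly small in the anchor resolution while a ``long'' chord need only exceed a length comparable to that same resolution, the relative error is $O(1)$ at the threshold --- the argument gives nothing there. The paper's fix is to replace $a_i$ by $b_i = a_i + \mu_i\gamma_i'(0)$ on the tangent disk at $a_i$, for which the curvature bound gives $\|x_i-b_i\|_2 \le \mu_i^2/(2\tau)$, i.e.\ \emph{quadratically} small in the resolution $T$; dividing by $\|x_1-x_2\|_2 > T/2$ then yields $O(T/\tau)$. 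This is precisely why the test set must contain nets of the sets $U\bigl((a_1-a_2) + \Tan{a_1}(T) - \Tan{a_2}(T)\bigr)$, each living in a $(2D+1)$-dimensional subspace, and it is the source of the $|A|^2(1+2/\epsilon)^{2D+1}$ count and hence the $(2D+1)\log\left(1+\frac{21(N+1)}{\delta_\manifold}\right)$ term in the theorem. Your $E$, of size roughly $|A|^2 + |A|(1+2/\epsilon_2)^{2D}$, cannot reproduce that bound. (A two-scale variant --- long/short threshold $\propto \tau\delta_\manifold/(N+1)$, much coarser than the anchor resolution --- could rescue the naive anchor-difference idea, but you do not set it up that way, and the constants would have to be re-derived.)

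The second gap is that you never say how the embedding of the finite set $E$ propagates to points near $E$ but not in it. The mechanism is $\|\widehat{\Phi}(u-e)\|_2 \le \|\Phi\|_2\,\|D_\xi\Psi(u-e)\|_2$ together with the operator-norm bound $\|\Phi\|_2 \le \left(\frac{N}{S}+1\right)(1+\delta)$ for an RIP matrix; this amplification by roughly $N+1$ is what forces every net resolution (the tangent nets, and through $\epsilon(T)=4\sqrt{T/\tau}$ also the geodesic resolution $T \propto \tau\delta_\manifold^2/(N+1)^2$) to scale like $1/(N+1)$, and it is the true origin of the $(N+1)$ and $(N+1)^2$ factors in the theorem. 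Your attribution of those factors (and of the $2D$ exponents) to complexification by $\Psi$ is incorrect --- Theorem~\ref{thm:Ward} already absorbs $\Psi$, the manifold and all nets stay in $\reals^N$, and the $2D$ and $2D+1$ exponents come from pairing anchors and tangent disks, not from doubling the ambient dimension. Without the operator-norm step your claim that ``the preservation of norms on $T_a$ transfers to $x_1-x_2$'' is unsupported.
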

The proof of this theorem can be found in Appendix~\ref{sec:proof_main}.  Note that the theorem statement gives a clear recipe for both creating a stable manifold embedding from an RIP operator as well as determining how many measurements are sufficient to guarantee the desired result.  The main theorem statement relates the manifold properties to the required RIP order $S$, which can be related to the number of measurements by the original RIP proof for the operator in question (see also Section~\ref{sec:manStruct}).  We note especially that the RIP order only scales linearly with the manifold dimension $D$ and logarithmically with the ambient dimension $N$.  This is especially important because most interesting RIP results also have a linear relationship between the RIP order and the number of measurements.  Consequently, for such RIP results, this theorem allows the creation of a manifold embedding when the number of measurements scales linearly with the manifold dimension.  Once an RIP operator is generated with a sufficient number of measurements, a manifold embedding can be created by simply randomizing the column signs of the operator.

Sometimes, such as in manifold learning algorithms (e.g., Isomap~\cite{ISOMAP}), the main interest is in preserving the intrinsic geodesic distances between points of a data set lying on a submanifold of $\reals^N$ instead of their extrinsic Euclidean distances.   Prior work~\cite{baraniuk2009random} has shown that operators that stably embed a manifold with respect to Euclidean distances are also stable embeddings with respect to geodesic distances.
Therefore, stable embedding operators constructed according to Theorem~\ref{thm:main} also provide geodesic stable embeddings, guaranteeing that manifold learning algorithms can be performed significantly faster in the compressed space without much degradation~\cite{hegde2007random}.

\subsection{Manifold Embeddings from Structured Matrices}
\label{sec:manStruct}

As mentioned above, Theorem~\ref{thm:main} allows
us to construct operators providing stable manifold embeddings from any operator that satisfies the RIP.  We illustrate the implications of our result with a few notable examples below that establish stable manifold embeddings for operators with more structure than existing results on random orthoprojectors~\cite{baraniuk2009random}.  In the corollaries that follow, we assume that $\manifold$ is a compact $D$-dimensional Riemannian submanifold of $\reals^N$ with condition number $\frac{1}{\tau}$, volume $V$, and  geodesic regularity $R$.
We also assume a fixed failure probability $0 < \rho < 1$ and conditioning $0 < \delta_\manifold < 1$.
In what follows, we denote by $C_1, C_2, \cdots$ universal constants that do not depend on the other variables in the corollaries and that differ from corollary to corollary.

To begin, we consider a generalization of Gaussian random matrices to subgaussian random matrices (including Bernoulli, etc.).\footnote{Subgaussian random variables are generalizations of Gaussian random variables; their definition can be found in~\cite{Vershynin2011}.}
\begin{cor}[Subgaussian matrices]
	Suppose $\Phi \in \reals^{M \times N}$ is a subgaussian random matrix with independent rows or columns following the construction in~\cite[Thm 5.65]{Vershynin2011}. If
	\begin{eqnarray*}
		M \ge \frac{C_1}{\delta_\manifold^2} \left(D\log\left( \frac{R N}{\tau \delta_\manifold} \right) + \log\left(\frac{V}{\rho}\right)\right) \log\left(\frac{N}{D}\right),
	\end{eqnarray*}
    then with probability greater than $1 - C_2{\rho}$, $\widehat{\Phi} = \Phi D_\xi$ provides a stable embedding of $\manifold$ with conditioning $\delta_\manifold$.
\end{cor}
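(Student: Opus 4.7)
The plan is to apply Theorem~\ref{thm:main} directly, taking $\Psi = I$, and invoke a standard RIP result for subgaussian matrices to turn the required RIP order $S$ into a lower bound on the number of measurements $M$. Concretely, I would set $\delta := \delta_\manifold / 42$ so as to meet the conditioning hypothesis of Theorem~\ref{thm:main}, and then choose $S$ to be the smallest integer satisfying the order requirement in Theorem~\ref{thm:main}. By Theorem~5.65 of~\cite{Vershynin2011} (or any of the standard subgaussian RIP results), for a suitable universal constant $C$ the matrix $\Phi$ satisfies RIP-$(S,\delta)$ with probability at least $1 - e^{-c M \delta^2}$ provided $M \ge C \delta^{-2} S \log(N/S)$. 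Taking this $M$ and combining with Theorem~\ref{thm:main} via a union bound gives the desired stable embedding for $\widehat{\Phi} = \Phi D_\xi$ with failure probability at most $\rho + e^{-cM\delta^2} \le C_2 \rho$.

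The main bookkeeping step is to simplify the required $S$ from Theorem~\ref{thm:main} into the clean form appearing in the corollary. Each of the three terms in the lower bound on $S$ should be bounded above as follows: the first logarithm collapses to $O\bigl(D \log(R N / (\tau \delta_\manifold))\bigr)$ after absorbing constants and the $\sqrt{D}$ factor (which contributes only a $\log D$ term, subsumed by $\log N$); the second term, $(2D+1)\log(1 + 21(N+1)/\delta_\manifold)$, is $O\bigl(D \log(N/\delta_\manifold)\bigr)$ and is likewise absorbed into $D \log(RN/(\tau \delta_\manifold))$; and the third term is simply $O(\log(V/\rho))$. Thus it suffices to take
\begin{equation*}
S \;=\; C' \left( D \log\!\left( \frac{R N}{\tau \delta_\manifold} \right) + \log\!\left( \frac{V}{\rho} \right) \right)
\end{equation*}
for a suitable constant $C'$.

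Plugging this choice of $S$ together with $\delta = \delta_\manifold/42$ into the subgaussian RIP bound $M \ge C \delta^{-2} S \log(N/S)$ and using that $S \ge D$ (hence $\log(N/S) \le \log(N/D)$) gives precisely the measurement count stated in the corollary, with a new universal constant $C_1$. The last thing to verify is the failure probability: the RIP event holds except with probability of order $e^{-c M \delta_\manifold^2}$, which by the form of $M$ is dominated by $\rho$ up to a constant multiple, so a union bound with the failure probability in Theorem~\ref{thm:main} yields total failure probability at most $C_2 \rho$.

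The only mildly delicate step is verifying that the second logarithmic term in Theorem~\ref{thm:main}'s order requirement truly is absorbed into the first, because the first contains a $\tau$ in the denominator while the second does not; this is fine so long as $1/\tau \ge 1$ (a standard normalization), but if one does not wish to assume this, one can keep a separate $D \log(N/\delta_\manifold)$ term and observe it is still majorized by $D \log(RN/(\tau\delta_\manifold))$ up to constants since typically $R \ge 1$ as well. Apart from this, the proof is essentially an exercise in combining Theorem~\ref{thm:main} with a black-box RIP result and tidying up logarithms.
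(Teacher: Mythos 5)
Your proposal is correct and follows essentially the same route as the paper: the paper's proof of this corollary is simply to invoke Theorem~\ref{thm:main} together with the standard subgaussian RIP guarantee $M \ge C_3 \frac{S}{\delta^2}\log(N/S)$ from~\cite{Baraniuk2008a,Vershynin2011}, which is exactly what you do, with the additional (and welcome) explicit bookkeeping of the constants, the $\log(N/S) \le \log(N/D)$ step, the union bound on failure probabilities, and the caveat about absorbing the second logarithmic term.
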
 
The proof of this corollary follows from the fact that such subgaussian random matrices satisfy RIP-($S,\delta$) with high probability whenever $M \ge C_3 {\frac{S}{\delta^2} \log\left(\frac{N}{S}\right)}$~\cite{Baraniuk2008a,Vershynin2011}.
A natural subset of subgaussian random matrices are matrices with i.i.d., symmetric, subgaussian entries of an appropriate subgaussian norm.\footnote{The subgaussian norm of a subgaussian random variable is a generalization of the standard deviation of a Gaussian random variable.}
For this subset of matrices, both $\Phi$ and $\Phi D_\xi$ have the same distribution and thus, the stable embedding for $\manifold$ can actually use just the operator $\Phi$ rather than the operator $\widehat{\Phi}$. 
This last observation formally proves a remark made briefly in~\cite{baraniuk2009random} that stable manifold embeddings can also arise from random subgaussian matrices in addition to random orthoprojectors.

To include a matrix with much more structure (i.e., not having i.i.d.\ entries), we also consider stable manifold embeddings by subsampled Fourier matrices.
\begin{cor}[Subsampled Fourier matrices]
	Suppose $\Phi \in \reals^{M \times N}$ is a subsampled Fourier matrix whose $M$ rows are chosen uniformly at random from the $N \times N$ DFT matrix.\footnote{In fact, this corollary works also for subsampled DTFT matrices~\cite{Rauhut2010b}.}
	If
	\begin{eqnarray*}
		M \ge {\frac{C_1}{\delta_\manifold^2} \left(D\log\left( \frac{R N}{\tau \delta_\manifold} \right) + \log\left(\frac{V}{\rho}\right)\right) \log^4\left(N\right) \log(\rho^{-1})}
	\end{eqnarray*}
	then with probability greater than $1 - C_2 {\rho}$, $\widehat{\Phi} = \Phi D_\xi$ stably embeds $\manifold$ with conditioning $\delta_\manifold$.
\end{cor}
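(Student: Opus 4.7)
The plan is to invoke Theorem~\ref{thm:main} with $\Psi = I$ (so that $\widehat{\Phi} = \Phi D_\xi$ as stated in the corollary) and to feed it the known RIP guarantee for subsampled Fourier matrices, thereby translating the required RIP order $S$ into a bound on the number of measurements $M$. Specifically, I would invoke the classical result (e.g., the version due to Rauhut cited as~\cite{Rauhut2010b}) that a subsampled Fourier matrix $\Phi \in \reals^{M \times N}$ satisfies RIP-$(S,\delta)$ with probability at least $1-\rho$ whenever
\begin{equation*}
M \ge C_3 \, \frac{S}{\delta^2} \log^4(N) \log(\rho^{-1}).
\end{equation*}
Setting $\delta = \delta_\manifold / 42$ meets the conditioning hypothesis of Theorem~\ref{thm:main}, so all that remains is to choose $S$ to be at least the quantity prescribed there and then substitute.

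To carry out the substitution, I would first simplify the $S$-bound from Theorem~\ref{thm:main}. The constants, the $\sqrt{D/2+1}$ factor, and the $(2D+1)\log(1 + 21(N+1)/\delta_\manifold)$ term can all be absorbed into a single $D\log(RN/(\tau\delta_\manifold))$ contribution at the cost of an absolute constant, since $\log(RN/(\tau\delta_\manifold))$ already contains a $\log N$ term. Similarly $\log(8 V^2/\rho) \lesssim \log(V/\rho)$. This yields the clean sufficient condition
\begin{equation*}
S \ \gtrsim \ D\log\!\left(\frac{RN}{\tau \delta_\manifold}\right) + \log\!\left(\frac{V}{\rho}\right).
\end{equation*}
Plugging this into the Fourier RIP requirement above gives
\begin{equation*}
M \ \gtrsim \ \frac{1}{\delta_\manifold^2}\left( D\log\!\left(\frac{RN}{\tau \delta_\manifold}\right) + \log\!\left(\frac{V}{\rho}\right) \right) \log^4(N) \log(\rho^{-1}),
\end{equation*}
which matches the corollary's bound after collecting universal constants into $C_1$.

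The only remaining bookkeeping is the failure probability accounting. The two sources of randomness --- the random row selection defining $\Phi$ and the Rademacher sequence $\xi$ --- are independent. By a union bound, the probability that either (i) $\Phi$ fails RIP-$(S, \delta_\manifold/42)$, or (ii) $\widehat{\Phi}$ fails to be a stable manifold embedding conditional on (i) not occurring, is at most $2\rho$, which is absorbed into the factor $C_2 \rho$. I do not anticipate any substantive obstacle: the corollary is essentially a plug-and-play consequence of Theorem~\ref{thm:main} combined with a known structured RIP result. The only mild care needed is ensuring that the logarithmic factors combine as claimed and that the $\delta_\manifold$-dependence factorizes cleanly outside the $\log$ arguments, both of which are immediate from the explicit form of the $S$-bound in Theorem~\ref{thm:main}.
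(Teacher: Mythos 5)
Your proposal is correct and follows essentially the same route as the paper: the corollary is obtained by plugging the known RIP guarantee for subsampled Fourier matrices (with probability $1-\rho$ whenever $M \ge C_3 \frac{S}{\delta^2}\log^4(N)\log(\rho^{-1})$) into Theorem~\ref{thm:main}, absorbing constants and lower-order logarithmic terms, and union-bounding the two failure events. The bookkeeping steps you flag (simplifying the $S$-bound and accounting for the two sources of randomness) are exactly what the paper's one-line justification implicitly relies on.
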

The proof of this corollary comes from the fact that subsampled Fourier matrices satisfy RIP-($S,\delta$) with probability greater than $1 - \rho$ whenever $M \ge C_3 {\frac{S}{\delta^2} \log^4(N) \log(\rho^{-1})}$~\cite{Rudelson2008a,Rauhut2010b}.
For dimensionality reduction problems where the data lies on a manifold, this result provides an efficient measurement scheme whereby the data is  pre-multiplied by a Rademacher sequence and then $M$ coefficients from the  Fourier transform of the data are randomly chosen.

In a similar direction, we also consider stable manifold embeddings from random convolutions.
\begin{cor}[Partial circulant matrices]
	Suppose $\Phi \in \reals^{M \times N}$ is a partial circulant matrix whose first row is made up of i.i.d.\ subgaussian random variables (see~\cite{Krahmer2012} for a detailed construction). If $N$ is large enough and
	\begin{eqnarray*}
		M \ge {\frac{C_1}{\delta_\manifold^2} \left(D \log\left( \frac{R N}{\tau \delta_\manifold} \right) + \log\left(\frac{V}{\rho}\right)\right) \log^4(N)},
	\end{eqnarray*}
	then with probability greater than $1 - C_2{\rho}$, $\widehat{\Phi} = \Phi D_\xi$ stably embeds $\manifold$ with conditioning $\delta_\manifold$.
\end{cor}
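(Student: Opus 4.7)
The plan is to derive this corollary by direct combination of two ingredients: the known RIP guarantee for partial circulant matrices from~\cite{Krahmer2012}, and our main result, Theorem~\ref{thm:main}, which upgrades any RIP operator (after randomizing its column signs) into a stable manifold embedding. This is the same template already used for the subgaussian and subsampled Fourier corollaries; only the RIP measurement bound changes.

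First, I would recall the RIP bound for partial circulant matrices. The result of~\cite{Krahmer2012} states that if $\Phi \in \reals^{M \times N}$ is a partial circulant matrix generated from an i.i.d.\ subgaussian vector, then $\Phi$ satisfies RIP-($S,\delta$) with probability at least $1-\rho$ whenever $M \ge C_3 \delta^{-2} S \log^2(S) \log^2(N)$ (for $N$ large enough that the log factors absorb the polylog dependence on $\rho^{-1}$). Next, I would set the RIP parameters to match what Theorem~\ref{thm:main} requires: conditioning $\delta = \delta_\manifold/42$, and RIP order
\begin{equation*}
S \;=\; \left\lceil 40\left( 2D \log\!\left(\tfrac{3528 R\sqrt{D/2+1}(N+1)^2}{\sqrt{\pi}\,\delta_\manifold^2 \tau}\right) + (2D+1)\log\!\left(1 + \tfrac{21(N+1)}{\delta_\manifold}\right) + \log\!\left(\tfrac{8V^2}{\rho}\right) \right) \right\rceil.
\end{equation*}
Absorbing the $D$ inside the logarithm into constants and collapsing repeated $\log N$ terms (using $D \le N$), this quantity scales like $S \lesssim D \log\!\left(\tfrac{RN}{\tau\delta_\manifold}\right) + \log\!\left(\tfrac{V}{\rho}\right)$. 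Plugging this $S$ and $\delta$ into the Krahmer--Mendelson--Rauhut bound yields the stated hypothesis on $M$, since the $\log^2(S)\log^2(N)$ factor is dominated by $\log^4(N)$.

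Once these substitutions are made, the conclusion follows by a two-step probabilistic argument. Let $A$ be the event that $\Phi$ satisfies RIP-$(S,\delta_\manifold/42)$, and let $B$ be the event that $\widehat{\Phi} = \Phi D_\xi$ stably embeds $\manifold$ with conditioning $\delta_\manifold$. By the RIP result for partial circulants, $\Proba(A) \ge 1 - \rho$. Conditioned on $A$, Theorem~\ref{thm:main} (applied with $\Psi = I$ and with a possibly re-scaled failure parameter absorbed into the universal constant) gives $\Proba(B \mid A) \ge 1 - \rho$, where the randomness is over $D_\xi$, which is independent of $\Phi$. Combining via $\Proba(B) \ge 1 - \Proba(A^c) - \Proba(B^c \mid A) \ge 1 - 2\rho$, we absorb the factor $2$ into $C_2$ to get the claimed probability $1 - C_2 \rho$.

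The only substantive bookkeeping step is verifying that the $\log^2 S$ term from the partial circulant RIP can indeed be absorbed into the $\log^4 N$ factor in the stated bound on $M$; this is where the ``$N$ large enough'' hypothesis is needed, ensuring $\log S \lesssim \log N$ for all relevant $S$. There is no further geometric or probabilistic obstacle: the whole corollary is a plug-and-play application of Theorem~\ref{thm:main} to an existing structured RIP guarantee, and the proof structure is identical to that of the subsampled Fourier corollary above, with only the measurement complexity bound changed.
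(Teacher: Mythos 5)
Your proposal is correct and follows essentially the same route as the paper: invoke the Krahmer--Mendelson--Rauhut RIP guarantee for partial circulant matrices, plug the RIP order and conditioning demanded by Theorem~\ref{thm:main} into that measurement bound, and union-bound the two failure events, absorbing the factor of $2$ into $C_2$. The only cosmetic difference is where the ``$N$ large enough'' hypothesis is said to enter: the paper uses it to make the cited RIP failure probability $N^{-(\log N)(\log^2 S)}$ smaller than $\rho$, while you use it to absorb $\log^2 S$ and the $\log(\rho^{-1})$ dependence into $\log^4 N$ --- both readings of the cited result are legitimate and lead to the same statement.
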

Here, the proof follows from the fact that partial circulant matrices satisfy RIP-($S,\delta$) with probability greater than $1 - {N^{-(\log N)(\log^2 S)}}$ (hence the requirement for $N$ to be large enough) whenever $M \ge C_3{\frac{S}{\delta^2} \log^{4}(N)}$ (for $N \ge S$)~\cite{Krahmer2012}.
This again affords us an efficient implementation of a dimensionality reduction scheme for data residing on or near a manifold.
One would first pre-process the data by multiplying its entries with a pre-chosen random Rademacher sequence.
Then, one would simply convolve the processed data with a separate random subgaussian sequence and arbitrarily select $M$ samples of the convolution output.

Before continuing, we note that recent work~\cite{Nelson2012} has shown that by introducing an appropriate (sparse) hashing matrix $H$ on the left of either the partial Fourier or partial circulant matrix $\Phi$ (whose dimensions are also appropriately chosen), the required number of measurements can be reduced by a factor of $\log N$.  Consequently, an application of this hashing matrix (see~\cite{Nelson2012}) as a post-processing step for the two corollaries above can further reduce the dimensions of the measurement space without sacrificing the conditioning of the  stable embedding.

In some situations, one may need to apply the convolution directly on the manifold-modeled data instead of using a pre-processing step (i.e., first multiplying by a diagonal Rademacher matrix). 
For this, consider the matrix $\widehat{\Phi} := R_{\Omega} F D_\xi F^H$ where $F \in \comps^{N \times N}$ is the DFT basis and $R_{\Omega} \in \reals^{M \times N}$ is a restriction operator that selects $M$ entries of a length-$N$ vector (or selects $M$ rows from an $N \times N$ matrix).
Now, this matrix follows our stable embedding construction as $\Phi := R_\Omega F$ is a subsampled Fourier matrix that satisfies the RIP (as long as $M$ is large enough) and $\Psi := F^H$ is orthonormal.
Conveniently, $F D_\xi F^H$ is a circular convolution matrix with $D_\xi$ being the (normalized) Fourier transform of the probe sequence of the convolution.
Thus, the matrix $\widehat{\Phi}$ represents a subsampled convolution operation that can be used to stably embed manifold-modeled data.
This idea is formalized in the follow corollary.
\begin{cor}[Random convolution matrices]
	\label{cor:Rand_conv2}
	Let $C_\xi \in \comps^{N \times N}$ be a random circulant matrix such that $C_\xi := F D_\xi F^H$ where $D_\xi$ is a random diagonal Rademacher matrix and $F$ is the DFT basis.
	Let $\Omega \subset \{1, 2, \cdots, N\}$ with $|\Omega| = M$ be a subset selected uniformly at random.
	If
	\begin{eqnarray*}
		M \ge {\frac{C_1}{\delta_\manifold^2} \left( D \log\left( \frac{RN}{\tau \delta_\manifold} \right) + \log\left( \frac{V}{\rho} \right) \right) \log^4(N)\log(\rho^{-1})},
	\end{eqnarray*}
	then with probability greater than $1 - C_2{\rho}$, $\widehat{\Phi} := R_\Omega C_\xi$ stably embeds $\manifold$ with conditioning $\delta_\manifold$.
\end{cor}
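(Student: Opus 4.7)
The plan is to recognize that the matrix $\widehat{\Phi} := R_\Omega C_\xi = R_\Omega F D_\xi F^H$ already has the exact form $\Phi D_\xi \Psi$ required by Theorem~\ref{thm:main}, with $\Phi := R_\Omega F$ and $\Psi := F^H$. Since $F$ is the $N \times N$ DFT basis, $\Psi = F^H$ is unitary, and $\Phi = R_\Omega F$ is precisely a subsampled Fourier matrix of the type analyzed in the second corollary above. So the corollary reduces to applying two existing results in sequence, combined with a union bound over their respective failure events.

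First, I would invoke the RIP result for subsampled Fourier matrices from~\cite{Rudelson2008a,Rauhut2010b}: with probability exceeding $1 - \rho$, the matrix $\Phi = R_\Omega F$ satisfies RIP-($S,\delta$) whenever $M \ge C_3 \frac{S}{\delta^2} \log^4(N)\log(\rho^{-1})$. I would choose the RIP parameters to match the hypotheses of Theorem~\ref{thm:main}, namely $\delta = \delta_\manifold/42$ and
\begin{equation*}
S = 40\left( 2D\log\!\left(\tfrac{3528 R\sqrt{D/2+1}\,(N+1)^2}{\sqrt{\pi}\,\delta_\manifold^2 \tau}\right) + (2D+1)\log\!\left(1+\tfrac{21(N+1)}{\delta_\manifold}\right) + \log\!\left(\tfrac{8V^2}{\rho}\right)\right),
\end{equation*}
which scales like $D\log(RN/(\tau\delta_\manifold)) + \log(V/\rho)$ up to constants and logarithmic factors absorbed by $\log^4(N)$.

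Second, conditioned on this RIP event, I would apply Theorem~\ref{thm:main} to conclude that $\widehat{\Phi} = \Phi D_\xi \Psi$ provides a stable embedding of $\manifold$ with conditioning $\delta_\manifold$, with probability exceeding $1-\rho$ over the Rademacher draw $D_\xi$. A union bound then yields overall failure probability at most $2\rho$, which is absorbed into $C_2 \rho$. Substituting the chosen $S$ and $\delta$ into the Fourier RIP bound produces exactly the measurement count in the corollary statement; the $\log^4(N)$ and $\log(\rho^{-1})$ factors come from the Fourier RIP result, while the $D\log(RN/(\tau\delta_\manifold))$ and $\log(V/\rho)$ terms come from the structure of $S$ required by Theorem~\ref{thm:main}.

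The argument has no genuinely difficult step, since the corollary is essentially a plug-and-play composition. The only mild care needed is in bookkeeping: absorbing the $\log N$ factors coming from the $(2D+1)\log(1 + 21(N+1)/\delta_\manifold)$ and $\log((N+1)^2)$ terms in $S$ into the $\log^4(N)$ factor already present in the Fourier RIP bound, and verifying that this absorption is lossless for the scaling claimed. A minor point to address is that $\Phi = R_\Omega F$ is complex-valued, but both Theorem~\ref{thm:Ward} and Theorem~\ref{thm:main} accommodate this through the unitary factor $\Psi \in \comps^{N\times N}$, so no modification of the argument is needed.
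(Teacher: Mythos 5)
Your proposal is correct and follows essentially the same route as the paper: the paper likewise observes that $\widehat{\Phi} = R_\Omega F D_\xi F^H$ matches the form $\Phi D_\xi \Psi$ of Theorem~\ref{thm:main} with $\Phi = R_\Omega F$ a subsampled Fourier matrix satisfying the RIP (per~\cite{Rudelson2008a,Rauhut2010b}) and $\Psi = F^H$ unitary, then substitutes the required $S$ and $\delta$ into the Fourier RIP measurement bound. Your added care about the union bound over the two failure events and the absorption of the extra $\log$ factors is exactly the bookkeeping the paper leaves implicit.
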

The proof for this corollary follows quickly from the fact that subsampled DFT matrices satisfy RIP-($S,\delta$) with high probability whenever $M \ge C_3{\frac{S}{\delta^2} \log^4(N) \log(\rho^{-1})}$~\cite{Rudelson2008a,Rauhut2010b}.

To address the constraint that some systems can only take localized measurements of the signal, we also consider operators represented by a \emph{Distinct Block Diagonal} (DBD) matrix $\Phi \in \reals^{MJ \times NJ}$ that is non-zero only on the diagonal blocks,
\begin{eqnarray*}
	\Phi = \left(
	\begin{array}{ccc}
		\Phi_1 & & \\
		& \ddots & \\
		& & \Phi_J
	\end{array}
	\right).
\end{eqnarray*}
The blocks $\Phi_j \in \reals^{M \times N}$ on the diagonal consist of i.i.d.\ subgaussian random variables (that are also independent across the blocks).
The following corollary establishes how such matrices can be used to stably embed manifold-modeled data.
\begin{cor}[DBD matrices]
	Let $\Phi \in \reals^{MJ \times NJ}$ be a DBD matrix described above, and let $C_\xi \in \comps^{NJ \times NJ}$ be the circulant matrix as described in Corollary~\ref{cor:Rand_conv2}.
	If $NJ$ is large enough and
	\begin{eqnarray*}
		MJ \ge {\frac{C_1}{\delta_\manifold^2} \left( D \log\left( \frac{RNJ}{\tau \delta_\manifold} \right) + \log\left( \frac{V}{\rho} \right) \right) \log^6(NJ)},
	\end{eqnarray*}
	then with probability greater than $1 - {C_2 \rho}$, $\widehat{\Phi} = \Phi C_\xi$ stably embeds $\manifold$ with conditioning $\delta_\manifold$.
\end{cor}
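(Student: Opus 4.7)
The plan is to show that $\widehat{\Phi} = \Phi C_\xi$ fits exactly into the template required by Theorem~\ref{thm:main}, with the unitary matrix being the inverse DFT and the ``RIP matrix'' being the DBD matrix composed with the DFT basis. Concretely, since $C_\xi = F D_\xi F^H$ with $F$ the $NJ \times NJ$ DFT, we may write
\begin{eqnarray*}
	\widehat{\Phi} \;=\; \Phi F D_\xi F^H \;=\; \Phi' D_\xi \Psi, \qquad \Phi' := \Phi F, \quad \Psi := F^H.
\end{eqnarray*}
Here $\Psi$ is unitary and $D_\xi$ is a diagonal Rademacher matrix independent of $\Phi$, so the randomization of column signs required by Theorem~\ref{thm:main} is already built into $\widehat{\Phi}$. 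The remaining task is purely to establish that the composite matrix $\Phi' = \Phi F$ satisfies the RIP of the appropriate order with high probability.

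First I would invoke the RIP result for subgaussian DBD matrices composed with an incoherent orthonormal basis from~\cite{Yap2011a}. The DFT basis $F$ achieves the minimum possible coherence $1/\sqrt{NJ}$, and consequently the DBD RIP bound specializes to the statement that $\Phi F$ satisfies RIP-$(S,\delta)$ with probability exceeding $1-\rho$ whenever
\begin{eqnarray*}
	MJ \;\ge\; C_3 \, \frac{S}{\delta^2} \, \log^{6}(NJ),
\end{eqnarray*}
provided $NJ$ is large enough (so the polylog concentration bounds are valid). This is where the $\log^{6}(NJ)$ factor in the corollary's measurement count originates; it is strictly a property of the DBD-times-Fourier RIP and not of the manifold argument.

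Next I would combine this RIP statement with Theorem~\ref{thm:main}. Setting $\delta = \delta_\manifold/42$ and choosing $S$ equal (up to a constant) to
\begin{eqnarray*}
	S \;\asymp\; D \log\!\left(\frac{RNJ}{\tau \delta_\manifold}\right) + \log\!\left(\frac{V}{\rho}\right)
\end{eqnarray*}
suffices for the RIP order condition in the main theorem, since all the logarithmic factors on $N$, $D$, $V$, $R$, $\tau$, and $\delta_\manifold$ inside the theorem's expression for $S$ can be absorbed into this form by standard manipulations. Multiplying this $S$ by the $\frac{1}{\delta_\manifold^2}\log^{6}(NJ)$ factor from the RIP bound, and using a union bound over the failure events of the RIP step and the stable-embedding step (each of probability at most $\rho$), yields exactly the measurement count stated in the corollary, with the failure probability $C_2 \rho$ arising from the union.

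The main obstacle in making this airtight is locating and citing the precise form of the DBD-plus-Fourier RIP result: the cited reference~\cite{Yap2011a} proves RIP for block diagonal matrices acting on signals sparse in a coherent basis, and one must check that plugging in the maximally incoherent DFT gives the stated $\log^{6}(NJ)$ scaling (versus, say, $\log^{4}$) and verify what ``$NJ$ large enough'' means quantitatively. Aside from that book-keeping, the proof is a direct substitution into Theorem~\ref{thm:main} in the same style as the preceding corollaries, and no new manifold-geometry argument is needed.
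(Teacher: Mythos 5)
Your proposal matches the paper's argument: the paper likewise writes $\Phi C_\xi = (\Phi F)D_\xi F^H$, invokes the result of~\cite{Yap2011a} that the DBD matrix satisfies the RIP for frequency-sparse signals (i.e., $\Phi F$ satisfies RIP-$(S,\delta)$ when $MJ \gtrsim S\log^6(NJ)$), and substitutes into Theorem~\ref{thm:main}. The only cosmetic difference is in the failure probability: the cited RIP result holds with probability exceeding $1 - C_3(NJ)^{-1}$, so ``$NJ$ large enough'' serves to make that failure probability at most a constant multiple of $\rho$, rather than merely to validate the polylog concentration bounds as you suggest.
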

The proof of this corollary follows quickly from the fact that a DBD matrix $\Phi$ satisfies RIP-($S,\delta$) with probability exceeding $1 - C_3 {(NJ)^{-1}}$ (hence the requirement that $NJ$ is large enough) for frequency sparse signals (i.e., $\Phi F$ satisfies RIP) whenever $MJ \ge C_4 {S \log^6(NJ)}$~\cite{Yap2011a}.
This corollary states that if we pre-process the data by convolving it with a random Rademacher probe, then a block-diagonal matrix (having significantly many more zeros than non-zeros) can stably embed a manifold.

As a last example, the following corollary indicates how one might be able to use a deterministic matrix construction to stably embed manifold-modeled data.
\begin{cor}[Deterministic binary matrices]
	Suppose $\Phi \in \{0,1\}^{M \times N}$ is a deterministic matrix following the construction given in~\cite{Devore2007}. If
	\begin{eqnarray*}
		M \ge {\frac{C_1}{\delta_\manifold^2} \left(D \log\left( \frac{R N}{\tau \delta_\manifold} \right) + \log\left(\frac{V}{\rho}\right)\right)^2 \log^2(N)},
	\end{eqnarray*}
	then with probability greater than $1 - \rho$, $\widehat{\Phi} = \Phi D_\xi$ provides a stable embedding of $\manifold$ with conditioning $\delta_\manifold$.
\end{cor}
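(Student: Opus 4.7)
The plan is to combine the deterministic RIP guarantee of~\cite{Devore2007} with Theorem~\ref{thm:main}, where the unitary factor $\Psi$ is simply taken to be the identity so that $\widehat{\Phi}$ agrees with the $\Phi D_\xi$ of the corollary. First I would read off from Theorem~\ref{thm:main} the RIP order and conditioning that $\Phi$ must supply: the required conditioning is $\delta \le \delta_\manifold/42$, and after collecting the three additive terms in the lower bound on $S$ and absorbing universal constants and lower-order $\log N$, $\log(1/\delta_\manifold)$ factors, the required order simplifies to
$$
S \,\ge\, S_\star \,:=\, C_0\left(D \log\left(\tfrac{RN}{\tau \delta_\manifold}\right) + \log\left(\tfrac{V}{\rho}\right)\right).
$$

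Next I would invoke~\cite{Devore2007}, which constructs an explicit binary matrix $\Phi \in \{0,1\}^{M \times N}$ satisfying RIP-$(S,\delta)$ deterministically whenever $M \ge C'\left(\tfrac{S \log N}{\delta}\right)^{2}$. This quadratic-in-$S$ scaling, which comes from the coherence-based proof (the DeVore construction has coherence on the order of $\log N / \sqrt{M}$, so that RIP follows from $S \cdot \mu \le \delta$), is precisely the reason the bound in the corollary carries a squared factor, in contrast to the linear-in-$S$ scaling that drove the earlier randomized corollaries. Substituting $S = S_\star$ and $\delta = \delta_\manifold/42$ into this bound and absorbing constants into $C_1$ yields
$$
M \,\ge\, \frac{C_1}{\delta_\manifold^2}\left(D \log\left(\tfrac{RN}{\tau \delta_\manifold}\right) + \log\left(\tfrac{V}{\rho}\right)\right)^{2} \log^2(N),
$$
which is exactly the hypothesis of the corollary. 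Hence Theorem~\ref{thm:main} applies and $\widehat{\Phi} = \Phi D_\xi$ provides a stable embedding of $\manifold$ with conditioning $\delta_\manifold$ with probability at least $1-\rho$.

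The only mildly subtle obstacle I anticipate is that the DeVore construction is natively defined only for dimension pairs of the form $(M,N) = (q^2, q^{r+1})$ with $q$ a prime power and $r < q$; for a generic ambient dimension $N$ one must pass to the nearest admissible parameters. Standard density-of-primes estimates show that the resulting loss in $M$ is a constant factor, which is harmlessly absorbed into $C_1$. A secondary point worth highlighting in the write-up is the source of the randomness: because the DeVore matrix is deterministic, the failure probability $\rho$ in the conclusion comes entirely from the Rademacher diagonal $D_\xi$ through the JL guarantee of Theorem~\ref{thm:Ward} that drives Theorem~\ref{thm:main}; no union bound over the sensing matrix is needed, which is qualitatively different from the situation in the subgaussian, Fourier, and circulant corollaries above.
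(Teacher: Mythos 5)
Your proposal is correct and follows exactly the route the paper takes: it reads off the required RIP order and conditioning from Theorem~\ref{thm:main}, invokes the DeVore guarantee that the binary construction satisfies RIP-$(S,\delta)$ once $M \ge C\,\frac{S^2}{\delta^2}\log^2(N)$, and notes that the quadratic dependence on $S$ is what produces the squared factor in the measurement bound. Your added observations --- that the admissible $(M,N)$ pairs of the construction cost only a constant factor, and that the failure probability here comes solely from $D_\xi$ so the conclusion holds with probability $1-\rho$ rather than $1 - C_2\rho$ --- are accurate and consistent with the paper.
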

Again, this corollary follows from the fact~\cite{Devore2007} that such matrices satisfy RIP-($S,\delta$) whenever $M \ge C_2{\frac{S^2}{\delta^2} \log^2(N)}$.
Despite the additional number of required measurements, deterministic matrices can be of interest to the CS community as it is an NP-hard problem to verify whether a randomly constructed matrix satisfies the RIP~\cite{Tillmann2012}.

\section{Discussions}

In this paper, we showed that all measurement operators $\Phi$ satisfying the RIP can be used to obtain a stable embedding of a manifold.
Moreover, we used this main result to demonstrate several specific examples of stable manifold embeddings that represent efficient dimensionality reduction schemes and operators that model constraints on the measurement process.
These include subsampled Fourier matrices, random convolution matrices, block diagonal matrices, and deterministically constructed matrices.
For each of these operators, we also provided the requisite number of measurements sufficient to ensure a stable embedding of the manifold with high probability and with a predetermined conditioning.
This result represents a combination of two directions of recent interest in the CS community: structured measurement matrices and the development of low-dimensional signal models beyond the canonical sparsity model.

While our main theorem provides a general way to construct manifold embeddings from RIP operators by paying reasonable penalties in the number of measurements, there is room for this result to be improved.  Specifically, Theorem~\ref{thm:main} could be strengthened by removing the logarithmic dependence on the ambient dimension $N$ from the required RIP order $S$.  This reduction by a factor of $\log(N)$ would come at the cost of the proof requiring much more sophisticated machinery involving chaining arguments as described in~\cite[Lemma 3.1]{Clarkson2008a}.\footnote{The fundamental technical consideration is that the current proof technique would have to be extended to consider coverings of the manifold at all scales instead of just a single scale.}  We have chosen to present the current result using a simpler proof technique because even with the improvement described above, the final result would still require a number of measurements that depends on $\log(N)$ due to this factor arising in the RIP requirements for known matrices (as demonstrated in the corollaries of Section~\ref{sec:manStruct}).  Therefore, while this more complex proof technique could reduce the  dependence on $\log(N)$, it could not entirely remove this dependence on the ambient dimension.

\bibliographystyle{IEEEtran}
\bibliography{IEEEabrv,ref}

\appendices

\section{The Condition Number of a Submanifold}
\label{subsec:consequence_cond_num}

The following lemma lists some implications of the condition number for certain geometric properties of the manifold that will be useful to our analysis.
\begin{lemma}
	\label{lem:cond_properties}
	Suppose a submanifold $\manifold \subset \reals^N$ has condition number $\frac{1}{\tau}$.
	Let $p,q \in \manifold$ be two distinct points.
	Then, we have the following three properties of the manifold.
	\begin{enumerate}
		\item (Curvature) If $\gamma(t)$ denotes a unit speed parameterization of the geodesic path joining $p$ and $q$, then $\|\gamma''(t)\|_2 \le \frac{1}{\tau}$.
		Moreover, denoting $\mu := d_\manifold(p,q)$, we have $q - p = \gamma(\mu) - \gamma(0) = \mu \gamma'(0) + r$ with $\|r\|_2 \le \frac{\mu^2}{2\tau}$.
	\item (Twisting of Tangent Spaces) Suppose $d_\manifold(p,q) \le \tau$. Pick $u \in \Tan{p}$ and let $v \in \Tan{q}$ be the parallel transport\footnote{Suppose $\gamma(t)$ denotes a unit speed parameterization of the geodesic path joining $p$ and $q$. By parallel transport~\cite{do1992riemannian}, we mean a vector field $v(t)$ defined along $\gamma(t)$ such that $v(0) = u$, $v(\mu) = v$, $\|v(t)\|_2 = \|u\|_2$, and $\langle v(t), \gamma'(t) \rangle = \langle u, \gamma'(0) \rangle$, where the last two conditions mean that $v(t)$ maintains a constant length and angle with respect to the path $\gamma(t)$.} of $u$ into $\Tan{q}$. Then, $\langle u,v \rangle \ge 1 - \frac{d_\manifold(p,q)}{\tau}$ is guaranteed to hold.\footnote{
Recall that we define the tangent spaces $\Tan{p}, \Tan{q}$ as $D$-dimensional subspaces of $\reals^N$ passing through the origin.
Therefore, we can take the inner product between tangent vectors in different tangent spaces and the inner product is simply the canonical inner product in $\reals^N$.
}
		\item (Self-avoidance) Suppose $\|p - q\|_2 \le \frac{3\tau}{8}$. Then, $\|p - q\|_2 \ge d_\manifold(p,q) - \frac{d_\manifold(p,q)^2}{2\tau}$. As a corollary, we also have $d_\manifold(p,q) \le \tau - \tau \sqrt{1 - \frac{2\|p-q\|_2}{\tau}}$.
	\end{enumerate}
\end{lemma}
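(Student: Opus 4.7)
The common thread running through all three statements is that the condition number $1/\tau$ controls the size of the second fundamental form of $\manifold$: for any unit-speed geodesic on $\manifold$, the extrinsic (Euclidean) acceleration is a normal vector whose magnitude is bounded by $1/\tau$. The plan is to first extract this curvature bound from the geometric content of the reach --- namely, that the open $\tau$-tube around $\manifold$ embeds in $\reals^N$ --- after which each conclusion reduces to a short computation.

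Property 1 then falls out of Taylor's theorem with integral remainder applied to $\gamma$. I would write $\gamma(\mu) - \gamma(0) = \mu\gamma'(0) + \int_0^\mu (\mu - s)\gamma''(s)\,ds$, identify $r$ with the integral, and bound $\|r\|_2 \le \int_0^\mu (\mu-s)\,ds / \tau = \mu^2/(2\tau)$ using the curvature bound and the triangle inequality.

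For property 2, I would set up the ODE satisfied by the parallel transport $v(t)$ along $\gamma$. In ambient coordinates, parallel transport on a submanifold of $\reals^N$ is exactly the condition that the ambient derivative $v'(t)$ equals the second fundamental form evaluated at $(\gamma'(t), v(t))$ (equivalently, the covariant derivative vanishes and only the normal component survives via the Gauss formula). Combined with the bound $1/\tau$ on the second fundamental form and with length preservation $\|v(t)\|_2 = \|u\|_2$, this yields $\|v'(t)\|_2 \le \|u\|_2/\tau$. Integrating over $[0, d_\manifold(p,q)]$ gives $\|v - u\|_2 \le (d_\manifold(p,q)/\tau)\|u\|_2$. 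Expanding $\|v - u\|_2^2 = 2\|u\|_2^2 - 2\langle u,v\rangle$ and rearranging --- using the hypothesis $d_\manifold(p,q) \le \tau$ to pass from the quadratic bound $1 - d_\manifold(p,q)^2/(2\tau^2)$ to the weaker linear bound $1 - d_\manifold(p,q)/\tau$ --- delivers the stated inequality.

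Property 3 has two parts, and the second is where I expect the real work. The first inequality $\|p-q\|_2 \ge d_\manifold(p,q) - d_\manifold(p,q)^2/(2\tau)$ is a reverse triangle inequality applied to the decomposition $q-p = \mu\gamma'(0) + r$ from property 1, together with $\|\gamma'(0)\|_2 = 1$. Inverting the resulting quadratic $\|p-q\|_2 \ge \mu - \mu^2/(2\tau)$ in $\mu := d_\manifold(p,q)$ and selecting the smaller root yields the corollary bound --- but this only identifies the correct root once we know a priori that $\mu \le \tau$. Turning the extrinsic hypothesis $\|p-q\|_2 \le 3\tau/8$ into the intrinsic bound $d_\manifold(p,q) \le \tau$ is the genuine obstacle: it reflects the fact that two Euclidean-close points on $\manifold$ cannot be geodesically far, since otherwise normal fibers of radius less than $\tau$ emanating from their geodesic neighborhoods would collide in $\reals^N$, contradicting that the $\tau$-tube is embedded. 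I would handle this either by a direct contradiction argument from the reach or by invoking the standard tubular-neighborhood / self-avoidance lemma from the condition-number literature.
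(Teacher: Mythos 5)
Your route for Parts 1 and 2 and for the first inequality of Part 3 is correct and is essentially the route the paper itself relies on: the paper offers no self-contained proof of this lemma, deferring to Section 6 of the Niyogi--Smale--Weinberger reference, where the argument is exactly as you describe --- the reach bounds the second fundamental form by $1/\tau$, Part 1 is Taylor's theorem with integral remainder, Part 2 integrates the parallel-transport equation $v'(t)=\mathrm{II}(\gamma'(t),v(t))$ and then relaxes the quadratic bound $1-\mu^2/(2\tau^2)$ to the linear one using $\mu\le\tau$, and the first inequality of Part 3 is the reverse triangle inequality applied to $q-p=\mu\gamma'(0)+r$. (Two small points there: the conclusion of Part 2 implicitly takes $u$ to be a unit vector, and passing from the bound on $\mathrm{II}(w,w)$ to the mixed term $\mathrm{II}(\gamma',v)$ needs a polarization step, which for unit vectors does return the constant $1/\tau$.)

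The genuine gap is in the corollary of Part 3, and your proposed fixes do not close it. Inverting $\|p-q\|_2\ge\mu-\mu^2/(2\tau)$ to get $\mu\le\tau-\tau\sqrt{1-2\|p-q\|_2/\tau}$ requires selecting the small root, i.e., knowing a priori that $\mu\le\tau$; the quadratic $\mu\mapsto\mu-\mu^2/(2\tau)$ decreases again past $\mu=\tau$, so the curvature bound alone is consistent with $\mu$ near $2\tau$ even when $\|p-q\|_2$ is tiny. Your first fallback, ``invoke the standard self-avoidance lemma from the condition-number literature,'' is circular: Part 3 \emph{is} that lemma, and the paper's own footnote warns that the corresponding proof in the cited reference is incomplete (which is precisely why the hypothesis here is tightened from $\tau/2$ to $3\tau/8$). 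Your second fallback (colliding normal fibers) is the right geometric intuition but is not an argument until carried out. Here is one way to finish with the tools you already have. Set $h(t):=\|\gamma(t)-p\|_2$ and $\phi:=\tfrac12h^2$; then $\phi''(t)=1+\langle\gamma(t)-p,\gamma''(t)\rangle\ge1-h(t)/\tau$, so $\phi'$ is strictly increasing from $\phi'(0)=0$, hence $h$ is strictly increasing, for as long as $h<\tau$. Thus on the connected component $[0,t^*]$ of $\{t:h(t)\le 3\tau/8\}$ containing $0$, $h$ is increasing and $\phi''\ge 1-\tfrac38=\tfrac58$, which forces $t^*\le\sqrt{8/5}\,h(t^*)\le\sqrt{8/5}\cdot\tfrac{3\tau}{8}<\tau/2$. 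If $t^*=\mu$ you are done. The remaining scenario --- the minimizing geodesic exits the Euclidean ball of radius $3\tau/8$ around $p$ and later returns to $q$ --- is exactly what must be excluded using the embeddedness of the $\tau$-tube (e.g., via Federer's inequality $\mathrm{dist}(q-p,\Tan{p})\le\|q-p\|_2^2/(2\tau)$, or the first-variation argument showing a minimizing geodesic cannot re-enter such a ball), and this exclusion is the substantive content of the $3\tau/8$ restriction. As written, your proposal asserts rather than proves this step.
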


The proofs of these properties are simple consequences of the condition number and follow the proofs of similar propositions in~\cite[Section 6]{Niyogi2009}.\footnote{
We note that the proof of Proposition 6.3 in~\cite{Niyogi2009} corresponding to Part 3 of Lemma~\ref{lem:cond_properties} is missing some important details. We have been unable to verify the result stated in~\cite{Niyogi2009} for the range of Euclidean distances ($\|p - q\|_2 \le \tau/2$) stated therein. However, we are able to verify that the result holds for $\|p - q\|_2 \le 3\tau/8$, and we restrict $\|p - q\|_2$ to this range in Part 3 of Lemma~\ref{lem:cond_properties}.
} 
The first property says that the worst case curvature of any unit speed geodesic path along the manifold is bounded by $\frac{1}{\tau}$.
The second property states that for small geodesic distances, the tangent spaces do not ``twist'' too much from one another.
Thus, if we compare a tangent vector to its parallel counterpart in another nearby tangent space, the angle between them is small. 
The last property states that for points on the submanifold close together in Euclidean space, their geodesic and Euclidean distances do not differ much.
Negating the statement, we see that two points with large geodesic distance cannot be arbitrarily close in Euclidean space.

\section{Covering Number of a Manifold}
\label{sec:covering number}

The geodesic regularity $R$ of a manifold $\manifold$ allows us to quantify the geodesic covering number of the manifold (i.e., how many geodesic balls of a certain radius are needed to cover the whole manifold).
More concretely, we say that a  set $\mathcal{C}$ is an $(\epsilon,d_\manifold)$-cover for $\manifold$ if $\manifold \subset \bigcup_{b \in \mathcal{C}} B_\manifold(b,\epsilon)$ where we recall that $B_\manifold(b,\epsilon)$ is the geodesic ball of radius $\epsilon$ centered at $b$.
This implies that for every $x \in \manifold$, we can find a $b \in \mathcal{C}$ such that $d_\manifold(b,x) \le \epsilon$.
The $(\epsilon,d_\manifold)$-cover $\mathcal{C}$ with the minimal cardinality is denoted by $\cover{\manifold}{d_\manifold}{\epsilon}$ and the cardinality of $\cover{\manifold}{d_\manifold}{\epsilon}$ is called the $(\epsilon,d_\manifold)$-covering number of $\manifold$ or simply the geodesic covering number.
The following lemma gives an upper bound on the geodesic covering number of a manifold.
\begin{lemma}
	\label{lem:covering_number}
	The $(\epsilon,d_\manifold)$-covering number of a $D$-dimensional Riemannian submanifold $\manifold \subset \reals^N$ is bounded by
	\begin{eqnarray*}
		\left| \cover{\manifold}{d_\manifold}{\epsilon} \right| \le \frac{V}{\inf_{x \in \manifold} \vol(B_\manifold(x,\frac{\epsilon}{2}))},
	\end{eqnarray*}
	where $V := \vol(\manifold)$.
	If $\manifold$ has geodesic regularity $R$, then
	\begin{eqnarray}
		\left| \cover{\manifold}{d_\manifold}{\epsilon} \right| \le \frac{\left(\frac{2R}{\sqrt{\pi}} \right)^D \left(\sqrt{D/2 + 1} \right)^D V}{\epsilon^D}.
		\label{eq:wakin_regularity}
	\end{eqnarray}
\end{lemma}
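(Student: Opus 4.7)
The strategy is to prove the first inequality by a standard volumetric maximal-packing argument, and then derive the explicit form in the second inequality by combining the geodesic regularity hypothesis with the closed-form expression for the volume of a Euclidean ball.

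For the first inequality, I would construct a maximal $\epsilon$-packing $\mathcal{P} \subset \manifold$ in the geodesic metric, that is, a set of points whose pairwise geodesic distances are at least $\epsilon$ and to which no further point can be added without violating this separation. Such a set exists and is finite by compactness of $\manifold$. By maximality, every $x \in \manifold$ lies within geodesic distance strictly less than $\epsilon$ of some $p \in \mathcal{P}$, so $\mathcal{P}$ is itself an $(\epsilon, d_\manifold)$-cover and hence $|\cover{\manifold}{d_\manifold}{\epsilon}| \le |\mathcal{P}|$. The geodesic triangle inequality then implies that the open geodesic balls $\{B_\manifold(p, \epsilon/2) : p \in \mathcal{P}\}$ are pairwise disjoint subsets of $\manifold$, each of volume at least $\inf_{x \in \manifold} \vol(B_\manifold(x, \epsilon/2))$. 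Summing these volumes and bounding the total by $V = \vol(\manifold)$ yields the first stated bound.

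For the second inequality, I would apply the geodesic regularity hypothesis of Definition~\ref{def:geo_reg} with radius $\epsilon/2$, which gives
\begin{eqnarray*}
\vol(B_\manifold(x, \epsilon/2)) \ge \frac{\vol(B_{\reals^D}(0, \epsilon/2))}{R^D} = \frac{\pi^{D/2} (\epsilon/2)^D}{R^D \, \Gamma(D/2+1)}
\end{eqnarray*}
uniformly in $x \in \manifold$, where I have used the standard formula for the volume of a Euclidean ball. Substituting this into the first inequality and rearranging gives a bound of the form $V R^D 2^D \Gamma(D/2+1)/(\pi^{D/2} \epsilon^D)$, which coincides with~\eqref{eq:wakin_regularity} once one invokes the elementary gamma-function estimate $\Gamma(D/2+1) \le (D/2+1)^{D/2}$ to replace $\Gamma(D/2+1)$ by $\bigl(\sqrt{D/2+1}\bigr)^D$.

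I do not expect any genuine obstacle; the proof is essentially bookkeeping around a classical packing-covering argument. The one point requiring a moment of care is the final gamma-function bound, which is a standard consequence of the log-convexity of $\log \Gamma$ (or, equivalently, of Stirling-type inequalities), together with the implicit restriction $\epsilon/2 \le \epsilon_0$ so that Definition~\ref{def:geo_reg} applies --- this is consistent with the paper's stated convention that the mild dependence on the resolution $\epsilon_0$ is suppressed.
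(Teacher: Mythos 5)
Your proposal is correct and follows essentially the same route as the paper, which simply defers to the standard volumetric packing--covering argument of~\cite[Proposition 10.1]{Rauhut2010b}: a maximal $\epsilon$-separated set is an $(\epsilon,d_\manifold)$-cover, the disjoint geodesic balls of radius $\epsilon/2$ give the volume-ratio bound, and geodesic regularity plus the Euclidean ball volume $\pi^{D/2}(\epsilon/2)^D/\Gamma(D/2+1)$ and the estimate $\Gamma(D/2+1)\le (D/2+1)^{D/2}$ yield~\eqref{eq:wakin_regularity}. The only cosmetic quibble is that the final gamma bound rests on log-convexity of $\Gamma$ (convexity of $\log\Gamma$), not ``log-convexity of $\log\Gamma$,'' and can also be checked by induction from $\Gamma(x+1)=x\Gamma(x)$.
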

The proof of this lemma follows the arguments of the proof of~\cite[Proposition 10.1]{Rauhut2010b}.
We remark that the definition of an equivalent geodesic covering regularity in~\cite{baraniuk2009random} corresponds to $\frac{2R}{\sqrt{\pi}}$ appearing in~\eqref{eq:wakin_regularity}.

We will also need to cover subsets of $\reals^D$ with Euclidean balls (instead of geodesic balls as in the previous lemma).
Thus, we say that the finite set $\cover{\manifold}{\| \cdot \|_2}{\epsilon}$ (of minimal cardinality) is an $(\epsilon,\| \cdot \|_2)$-cover for a subset $\manifold \subset \reals^D$ if $\manifold \subset \bigcup_{b \in \cover{\manifold}{\| \cdot \|_2}{\epsilon}} B_{\reals^D}(b,\epsilon)$. 

\section{Proof of Theorem~\ref{thm:main}}
\label{sec:proof_main}

Mathematically, if we introduce some particular notation, the stable embedding statement~\eqref{eq:SME_def1} can be presented in an equivalent way that is more useful for the desired proof.  First, define the operator $U:\reals^N \setminus \{0\} \rightarrow \sphere^{N-1}$ that takes a non-zero vector and projects it onto the unit sphere (i.e., for any $x \in \reals^N \setminus \{0\}$, $U(x) := \frac{x}{\|x\|_2}$).
$U$ can also act on a subset of $\reals^N$ such that if $\manifold \subset \reals^N$, then $U(\manifold) := \left\{ \frac{x}{\|x\|_2} \;|\; x \in \manifold \setminus \{0\} \right\}$.
Next, we define the difference between any two subsets $A - B$ (with $A,B \subset \reals^N$) as the set comprised of pairwise differences between the elements of the sets, $A - B := \left\{ a - b \;|\; a \in A, \; b \in B \right\}$.
Finally, for a finite subset $\manifold$ of $\reals^N$, $|\manifold|$ denotes its cardinality.

Suppose $\manifold$ is the Riemannian submanifold considered in Theorem~\ref{thm:main} and define the set of \emph{chords} of $\manifold$ (i.e., the set of all normalized difference vectors in $\manifold$) as
\begin{eqnarray*}
		U(\manifold - \manifold) = \left\{\left.\frac{x - y}{\|x - y\|_2} \;\right|\; x, y \in \manifold , \;x \neq y\right\}.
\end{eqnarray*} 
Then, $\widehat{\Phi}$ provides a stable embedding of $\manifold$ with conditioning $\delta_\manifold$ if and only if $\sup_{x \in U(\manifold - \manifold)} \left| \|\widehat{\Phi} x\|_2^2 - 1 \right| \le \delta_\manifold$. 
In other words, $\widehat{\Phi}$ provides a stable embedding of $\manifold$ if and only if $\widehat{\Phi}$ approximately preserves the norms of all  elements in $U(\manifold - \manifold)$.
This equivalence follows immediately from the stable embedding definition, Definition~\ref{def:stable_emb_def}, after a judicious rearrangement of the variables. 
We will use this equivalence for the proof of Theorem~\ref{thm:main}.

The proof of Theorem~\ref{thm:main} follows very closely the proof technique of~\cite{baraniuk2009random} and is basically comprised of three steps.
The first step involves judiciously choosing a \emph{generalized} covering set $B$ of the manifold $\manifold$ using a collection of points on the manifold and their corresponding tangent planes.
Lemma~\ref{lem:cover} then shows that every point of $U(\manifold-\manifold)$ can be approximated by some point in $U(B-B)$. 
The second step (encapsulated by Lemma~\ref{lem:apply_Ward}) then applies the JL lemma for RIP operators (i.e., Theorem~\ref{thm:Ward}) to obtain an approximate norm preservation of all elements of $U(B-B)$. 
Finally in Section~\ref{subsec:syn}, we extend this approximate norm preservation to all points on $U(\manifold-\manifold)$ via simple geometric arguments. 
As described, the proof technique here distinguishes from that of~\cite{baraniuk2009random} mainly in the separation of the stable embedding operator from the covering of the manifold. 

\subsection{Covering $U(\manifold- \manifold)$}

In this section, we construct a set $B$ and show in Lemma~\ref{lem:cover} that $U(B-B)$ is a cover of $U(\manifold- \manifold)$. 
Let $A = A(T) := \cover{\manifold}{d_\manifold}{T}$ for some $T \le \frac{3 \tau}{4}$ be the $(T, d_\manifold)$-cover of $\manifold$ of minimum cardinality.
For any $x \in \manifold$, we can find an $a \in A$ such that $d_\manifold(a,x) \le T$. 
Define a generalized covering set $B$ of the manifold $\manifold$ as
\begin{eqnarray*}
	B = B(T) = \bigcup_{a \in A} \{a + \Tan{a}(T)\},
\end{eqnarray*}
where $\Tan{a}(T) := \{ u \in \Tan{a} \;|\; \|u\|_2 \le T \}$ refers to all tangent vectors of $\manifold$ at the point $a$ whose lengths are less than $T$.\footnote{
Recall that we had defined the tangent space $\Tan{a}$ as a $D$-dimensional subspace of $\reals^N$ passing through the origin. Therefore when considering the Euclidean length of $u \in \Tan{a}$ and adding $u$ to the point $a \in \manifold$, we perform these operations in $\reals^N$.
} 
$B$ is called a generalized covering set as it is a union of (subsets of) affine $D$-dimensional planes of $\reals^N$ (i.e., $B$ is not a finite set). 

The goal of this section is to show that $U(B-B)$ is a suitable cover of $U(\manifold- \manifold)$ as detailed in the following lemma:
\begin{lemma}
	\label{lem:cover}
	Let $B$ be defined as above.
	For $T \le \frac{3\tau}{4}$, set $\epsilon(T) := 4\sqrt{\frac{T}{\tau}}$.
	Then, $U(B-B)$ is an $(\epsilon(T), \| \cdot \|_2)$-cover of $U(\manifold- \manifold)$.
	In other words, for every $u \in U(\manifold - \manifold)$, we can find a $b \in U(B-B)$ such that $\|u - b\|_2 \le \epsilon(T)$.
\end{lemma}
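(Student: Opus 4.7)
The plan is to split on the Euclidean chord length $\|x - y\|_2$: when it is long (say exceeding $T^{3/2}/(2\sqrt{\tau})$), I approximate $x$ and $y$ independently by their nearest tangent-plane points in $B$ and control the perturbation of the normalized chord; when it is short, the additive perturbation approach degrades, so I instead approximate the unit chord direction itself by a single tangent vector living in one tangent plane of the cover. A standard perturbation estimate used in both regimes is
\begin{equation*}
\|U(u) - U(v)\|_2 \;\le\; \frac{2\|u - v\|_2}{\max(\|u\|_2, \|v\|_2)}
\end{equation*}
for nonzero $u, v \in \reals^N$, which follows from $U(u) - U(v) = (u-v)/\|u\|_2 + U(v)(\|v\|_2 - \|u\|_2)/\|u\|_2$ together with the symmetric bound obtained by interchanging the roles of $u$ and $v$.

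For the \emph{long-chord} case, pick $a_x, a_y \in A$ with $d_\manifold(a_x, x), d_\manifold(a_y, y) \le T$ and set $b_1 := a_x + P_{\Tan{a_x}}(x - a_x)$, $b_2 := a_y + P_{\Tan{a_y}}(y - a_y)$, where $P_{\Tan{a}}$ denotes orthogonal projection onto $\Tan{a}$. Because projection is a contraction, the tangent parts lie in $\Tan{a_x}(T)$ and $\Tan{a_y}(T)$, so $b_1, b_2 \in B$. Part 1 of Lemma~\ref{lem:cond_properties}, applied along the geodesic from $a_x$ to $x$, yields $\|x - b_1\|_2 \le T^2/(2\tau)$ (and similarly for $y$), hence $\|(x - y) - (b_1 - b_2)\|_2 \le T^2/\tau$. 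The perturbation estimate then gives $\|U(x - y) - U(b_1 - b_2)\|_2 \le 2T^2/(\tau\|x - y\|_2)$, which is at most $4\sqrt{T/\tau}$ whenever $\|x - y\|_2 \ge T^{3/2}/(2\sqrt{\tau})$.

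For the \emph{short-chord} case $\|x - y\|_2 < T^{3/2}/(2\sqrt{\tau})$, Part 3 of Lemma~\ref{lem:cond_properties} (self-avoidance) gives $\mu := d_\manifold(x, y) \le 2\|x - y\|_2 < T^{3/2}/\sqrt{\tau} \le T$, where the last inequality uses $T \le 3\tau/4 < \tau$. Let $\gamma$ be the unit-speed geodesic from $x$ to $y$; Part 1 yields $y - x = \mu\gamma'(0) + r$ with $\gamma'(0) \in \Tan{x}$ and $\|r\|_2 \le \mu^2/(2\tau)$. Parallel-transporting $\gamma'(0)$ into a unit vector $w \in \Tan{a_x}$ costs $\|\gamma'(0) - w\|_2 \le \sqrt{2T/\tau}$ by Part 2 of Lemma~\ref{lem:cond_properties} (applicable since $d_\manifold(a_x, x) \le T \le \tau$). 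Taking $b_1 := a_x + \mu w$ and $b_2 := a_x$ (both in $B$ because $\mu \le T$), we have $b_1 - b_2 = \mu w$ and
\begin{equation*}
\|(y - x) - \mu w\|_2 \;\le\; \mu\sqrt{2T/\tau} + \mu^2/(2\tau).
\end{equation*}
Since $\|\mu w\|_2 = \mu \ge \|y - x\|_2$, the perturbation estimate yields $\|U(y - x) - w\|_2 \le 2\sqrt{2T/\tau} + \mu/\tau$, and $\mu/\tau \le (T/\tau)^{3/2} \le \sqrt{T/\tau}$ bounds this by $(1 + 2\sqrt{2})\sqrt{T/\tau} < 4\sqrt{T/\tau}$.

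I expect the main obstacle to be bookkeeping: calibrating the threshold $T^{3/2}/(2\sqrt{\tau})$ so that the two regimes deliver exactly the constant $4\sqrt{T/\tau}$, and confirming $b_1 \ne b_2$ so that $U(b_1 - b_2)$ is well-defined (which follows from $\|b_1 - b_2\|_2 \ge \|x - y\|_2 - T^2/\tau > 0$ in the long-chord case and from $\mu > 0$ in the short-chord case). The hypothesis $T \le 3\tau/4$ is exactly what unlocks both self-avoidance (Part 3) and parallel transport (Part 2), and it also enforces $T/\tau \le 1$ so that lower-order linear-in-$T/\tau$ terms absorb into the leading $\sqrt{T/\tau}$ rate.
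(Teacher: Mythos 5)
Your proof is correct and follows essentially the same route as the paper's: both split $U(\manifold-\manifold)$ into short and long chords, cover a short chord by a parallel-transported unit tangent vector at the nearest anchor (using Parts 1--3 of Lemma~\ref{lem:cond_properties}) and a long chord by approximating each endpoint within its anchor's tangent plane to accuracy $T^2/(2\tau)$, then control the normalization with a perturbation estimate. The differences are cosmetic --- your threshold $T^{3/2}/(2\sqrt{\tau})$ versus the paper's $T/2$, orthogonal projection versus $\mu\gamma'(0)$ for the long-chord anchors, and a packaged symmetric perturbation inequality versus ad hoc algebra --- though note that your non-degeneracy justification $\|x-y\|_2 - T^2/\tau > 0$ only holds when $T < \tau/4$, a corner case the paper's own proof silently shares and which is harmless since the lemma is ultimately invoked with $T = O(\delta_\manifold^2\tau/(N+1)^2)$.
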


\begin{proof} 
To prove this lemma, we break the set of chords $U(\manifold- \manifold)$ into  sets of ``long'' and ``short'' chords which we will cover separately.
The sets of short and long chords (delineated by Euclidean distance $\frac{T}{2}$) are defined by:
\begin{eqnarray*}
	U^s(\manifold - \manifold) &:=& \left\{ \frac{x_1 - x_2}{\|x_1 - x_2\|_2} \;|\; x_1, x_2 \in \manifold, 0 < \|x_1 - x_2\|_2 \le \frac{T}{2} \right\}, \\
	U^l(\manifold - \manifold) &:=& \left\{ \frac{x_1 - x_2}{\|x_1 - x_2\|_2} \;|\; x_1, x_2 \in \manifold, \|x_1 - x_2\|_2 > \frac{T}{2} \right\},
\end{eqnarray*}
and $U(\manifold- \manifold) = U^s(\manifold - \manifold) \cup U^l(\manifold - \manifold)$.

Let us start with the cover of $U^s(\manifold - \manifold)$ where, due to the locally Euclidean structure of manifolds, the short chords in $U^s(\manifold - \manifold)$ can be approximated by tangent vectors of the manifold.
Pick an element $\frac{x_1 - x_2}{\|x_1 - x_2\|_2}$ of $U^s(\manifold - \manifold)$ where by definition $\|x_1 - x_2\|_2 \le \frac{T}{2}$.
From Lemma~\ref{lem:cond_properties}, $\|x_1 - x_2\|_2 \le \frac{T}{2} \le \frac{3\tau}{8}$ (since we assume $T \le \frac{3\tau}{4}$) implies that
\begin{eqnarray}
	d_\manifold(x_1,x_2) \le \tau - \tau\sqrt{1 - \frac{2\|x_1 - x_2\|_2}{\tau}} \le \tau - \tau \left(1 - \frac{2\|x_1 - x_2\|_2}{\tau} \right) = 2\|x_1 - x_2\|_2 \le T.
	\label{eq:bound_on_mu}
\end{eqnarray}
Now, let $\mu := d_\manifold(x_1, x_2)$ and let $\gamma(t)$ be the unit-speed geodesic parameterization from $x_1$ to $x_2$ where $\gamma(0) = x_1$, $\gamma(\mu) = x_2$, and $\gamma'(0) \in U(\Tan{x_1})$.
From Lemma~\ref{lem:cond_properties}, we have
\begin{eqnarray}
	x_1 - x_2 = \gamma(\mu) - \gamma(0) = \mu \gamma'(0) + r,
	\label{eq:short_chord_para}
\end{eqnarray}
with $\|r\|_2 \le \frac{\mu^2}{2\tau}$. 
Let $a \in A$ be the closest geodesic covering point to $x_1$ (so that $d_\manifold(a,x_1) \le T$) and let $b \in U(\Tan{a})$ be the parallel transport of $\gamma'(0)$ onto $\Tan{a}$. 
First, $b \in U(B-B)$ by definition of the set $B$ and second, Lemma~\ref{lem:cond_properties} says that $\langle \gamma'(0), b \rangle \ge 1 - \frac{d_\manifold(a,x_1)}{\tau} \ge 1 - \frac{T}{\tau}$, since $d_\manifold(a,x_1) \le T \le \frac{3\tau}{4} \le \tau$. 
Thus,
\begin{eqnarray}
	\|\gamma'(0) - b\|_2^2 = \|\gamma'(0)\|_2^2 + \|b\|_2^2 - 2 \langle \gamma'(0), b \rangle = 2(1 - \langle \gamma'(0), b \rangle) \le  \frac{2T}{\tau}.
	\label{eq:bound_gammab}
\end{eqnarray}
We now show that $b$ is indeed close to the short chord $\frac{x_1 - x_2}{\|x_1 - x_2\|_2}$ by combining~\eqref{eq:short_chord_para} and~\eqref{eq:bound_gammab}:
\begin{eqnarray}
	\left\|\frac{x_1 - x_2}{\|x_1 - x_2\|_2} - b \right\|_2 &=& \left\|\frac{\mu \gamma'(0) + r}{\|x_1 - x_2\|_2} - b \right\|_2 \nonumber \\
	&=& \left\|\gamma'(0) - b + \left(\frac{\mu}{\|x_1 - x_2\|_2} - 1\right)\gamma'(0) + \frac{r}{\|x_1 - x_2\|_2} \right\|_2 \nonumber \\
	&\le& \sqrt{\frac{2T}{\tau}} + \left(\frac{\mu}{\|x_1 - x_2\|_2} - 1\right) + \frac{\mu}{\|x_1 - x_2\|_2} \cdot \frac{\mu}{2 \tau}. 
	\label{eq:short_chord_1} 
\end{eqnarray}
To remove the dependence of~\eqref{eq:short_chord_1} on $\|x_1 - x_2\|_2$, we use Lemma~\ref{lem:cond_properties} to obtain
\begin{eqnarray*}
	\|x_1 - x_2\|_2 \ge \mu - \frac{\mu^2}{2\tau} = \mu \left(1 - \frac{\mu}{2\tau} \right) \Leftrightarrow \frac{\mu}{\|x_1 - x_2\|_2} \le  \frac{1}{\left(1 - \frac{\mu}{2\tau} \right)} \le 1 + \frac{\mu}{\tau},
\end{eqnarray*}
where we used the inequality $\frac{1}{1 - a} \le (1 + 2a)$ whenever $0 < a \le \frac{1}{2}$ (this is true since $\frac{\mu}{2\tau} \le \frac{T}{2\tau} \le \frac{3}{8}$). 
Applying this to~\eqref{eq:short_chord_1} and applying $\mu \le T$ obtained earlier in~\eqref{eq:bound_on_mu}, we obtain
\begin{eqnarray*}
	\left\|\frac{x_1 - x_2}{\|x_1 - x_2\|_2} - b \right\|_2 \le
	\sqrt{\frac{2T}{\tau}} + \frac{\mu}{\tau} + \left( 1 + \frac{\mu}{\tau} \right)\frac{\mu}{2\tau} \le 4 \sqrt{\frac{T}{\tau}} =: \epsilon_1(T), 
\end{eqnarray*}
where we used the fact that $\frac{T^2}{\tau^2} \le \frac{T}{\tau} \le \sqrt{\frac{T}{\tau}} \le 1$. 
This proves that for every element of $U^s(\manifold - \manifold)$, we can find an element $b \in U(B-B)$ that is within $\epsilon_1(T)$ of it.
Thus, $U(B-B)$ is an $(\epsilon_1(T), \| \cdot \|_2)$-cover of $U^s(\manifold - \manifold)$.

Let us now move on to covering $U^l(\manifold - \manifold)$. 
Pick an element $\frac{x_1 - x_2}{\|x_1 - x_2\|_2}$ of $U^l(\manifold - \manifold)$. 
For each $x_i$, for $i = 1, 2$, choose its closest geodesic covering point $a_i \in A$ so that $\mu_i := d_\manifold(a_i, x_i) \le T$.
Let $\gamma_i(t)$ be the unit-speed geodesic parameterization from $a_i$ to $x_i$, so that $\gamma_i(0) = a_i$, $\gamma_i(\mu_i) = x_i$, and $\gamma_i'(0) \in U(\Tan{a_i})$. 
From Lemma~\ref{lem:cond_properties}, we have $x_i - a_i = \gamma_i(\mu_i) - \gamma_i(0) = \mu_i \gamma_i'(0) + r_i$, with $\|r_i\|_2 \le \frac{\mu_i^2}{2\tau}$.
Define $b_i = a_i + \mu_i \gamma_i'(0)$ where it is clear that $x_i - b_i = r_i$ and $b_i \in \{a_i + \Tan{a_i}(T)\} \subset B$.
We will use $\frac{b_1 - b_2}{\|b_1 - b_2\|_2} \in U(B-B)$ as a covering point near to $\frac{x_1 - x_2}{\|x_1 - x_2\|_2}$. 
Following~\cite{Clarkson2008a}, we have
\begin{eqnarray*}
	\left\| \frac{x_1 - x_2}{\|x_1 - x_2\|_2} - \frac{b_1 - b_2}{\|b_1 - b_2\|_2} \right\|_2
	&=& \left\| \frac{(x_1 - x_2) - (b_1 - b_2)}{\|x_1 - x_2\|_2} + \frac{(b_1 - b_2)(\|b_1 - b_2\|_2 - \|x_1 - x_2\|_2)}{\|x_1 - x_2\|_2 \|b_1 - b_2\|_2} \right\|_2 \\
	&\le& \left\| \frac{(x_1 - x_2) - (b_1 - b_2)}{\|x_1 - x_2\|_2} \right\|_2 + \left\| \frac{(b_1 - b_2)(\|b_1 - b_2\|_2 - \|x_1 - x_2\|_2)}{\|x_1 - x_2\|_2 \|b_1 - b_2\|_2} \right\|_2.
\end{eqnarray*}
We will calculate each of the terms separately.
For the first term, we see that
\begin{eqnarray*}
	\left\| \frac{(x_1 - x_2) - (b_1 - b_2)}{\|x_1 - x_2\|_2} \right\|_2 &=& \left\| \frac{(x_1 - b_1) - (x_2 - b_2)}{\|x_1 - x_2\|_2} \right\|_2 \\
	&\le& \frac{1}{\|x_1 - x_2\|_2}\left( \frac{\mu_1^2}{2\tau} + \frac{\mu_2^2}{2\tau} \right) \\
	&\le& \frac{T^2}{\tau \|x_1 - x_2\|_2}.
\end{eqnarray*}
For the second term, we have
\begin{eqnarray*}
	\left\| \frac{(b_1 - b_2)(\|b_1 - b_2\|_2 - \|x_1 - x_2\|_2)}{\|x_1 - x_2\|_2 \|b_1 - b_2\|_2} \right\|_2
	&=& \frac{\left|\|x_1 - x_2\|_2 -\|b_1 - b_2\|_2  \right|}{\|x_1 - x_2\|_2} \\
	&\le&  \frac{\|(x_1 - x_2) - (b_1 - b_2)\|_2}{\|x_1 - x_2\|_2} \\
	&=&  \frac{\|(x_1 - b_1) - (x_2 - b_2)\|_2}{\|x_1 - x_2\|_2} \\
	&\le& \frac{T^2}{\tau \|x_1 - x_2\|_2},
\end{eqnarray*}
where we used the reverse triangle inequality in the second line. 
Now, our definition of long chords implies that $\|x_1 -x_2\|_2 > \frac{T}{2}$.
Therefore,
\begin{eqnarray*}
	\left\| \frac{x_1 - x_2}{\|x_1 - x_2\|_2} - \frac{b_1 - b_2}{\|b_1 - b_2\|_2} \right\|_2 &<& \frac{4T}{\tau} =: \epsilon_2(T).
\end{eqnarray*}
Thus, $U(B-B)$ is an $(\epsilon_2(T), \| \cdot \|_2)$-cover of $U^l(\manifold - \manifold)$.

Putting everything together, since $\epsilon_1(T) = 4\sqrt{\frac{T}{\tau}} \ge 4\frac{T}{\tau} = \epsilon_2(T)$, we have that $U(B-B)$ is a $(4\sqrt{\frac{T}{\tau}}, \|\cdot\|_2)$-cover of $U(\manifold-\manifold)$. 
\end{proof}

\subsection{Applying the JL Lemma}

We want to use $U(B-B)$ as a proxy for $U(\manifold-\manifold)$ for applying Theorem~\ref{thm:Ward}.
However, $U(B-B)$ is not just a finite collection of points and thus Theorem~\ref{thm:Ward} cannot be applied directly.
Fortunately, it is well-known that unit spheres on planes (or affine planes) can be well-covered by a finite collection of points.
Indeed, as a corollary to Lemma~\ref{lem:covering_number} (see also~\cite{Rauhut2010b}), the $(\epsilon, \|\cdot\|_2)$-covering number of a $D$-dimensional sphere is $\left(1 + \frac{2}{\epsilon}\right)^D$. 

$U(B-B)$ can be divided into two sets of elements, namely:
\begin{enumerate} 
	\item $B_1 := U\left( \bigcup_{a \in A} \{\Tan{a}(T) - \Tan{a}(T)\} \right) = U\left( \bigcup_{a \in A} \Tan{a} \right)$, and
	\item $B_2 := U\left( \bigcup_{a_1,a_2 \in A, a_1 \neq a_2} \left\{ (a_1 - a_2) + (\Tan{a_1}(T) - \Tan{a_2}(T)) \right\} \right)$. 
\end{enumerate} 
The set $B_1$ is comprised of $|A|$ $D$-dimensional unit spheres.
From our earlier discussion, we know that each unit sphere can be $(\epsilon, \|\cdot\|_2)$-covered by at most $(1 + \frac{2}{\epsilon})^D$ points.
Thus, $|\cover{B_1}{\|\cdot\|_2}{\epsilon}| \le |A|(1 + \frac{2}{\epsilon})^D$. 
The set $B_2$ is the projection onto the unit sphere (in $\reals^N$) of not more than $|A|^2$ subsets of affine planes where each affine plane is contained in a linear subspace of dimension $2D+1$. 
Thus, $|\cover{B_2}{\|\cdot\|_2}{\epsilon}| \le |A|^2(1 + \frac{2}{\epsilon})^{2D+1}$. 

Define the collection of points $E(\epsilon) := \cover{B_1}{\|\cdot\|_2}{\epsilon} \cup \cover{B_2}{\|\cdot\|_2}{\epsilon}$. 
From our previous discussion, the cardinality of $E(\epsilon)$ is bounded by
\begin{eqnarray}
	|E(\epsilon)| \le |A|\left(1 + \frac{2}{\epsilon} \right)^D + |A|^2 \left(1 + \frac{2}{\epsilon} \right)^{2D+1}	
	\le 2 |A|^2 \left(1 + \frac{2}{\epsilon} \right)^{2D+1}.
	\label{eq:size_of_E}
\end{eqnarray}
By construction, for any $b \in U(B-B)$, we can find an $e \in E(\epsilon)$ such that $\|b - e\|_2 \le \epsilon$.
With the aid of $E(\epsilon)$, we can show the stable embedding of $U(B-B)$ by the operator $\widehat{\Phi}$ defined in Theorem~\ref{thm:main}.
\begin{lemma}
	\label{lem:apply_Ward}
	Choose any failure probability $\rho$ and conditioning $\delta_\manifold' \le \frac{4}{9}$.
	Set the covering resolution $\epsilon$ in the set $E(\epsilon)$ to $\epsilon = \frac{\delta_\manifold'}{N+1}$.
	Suppose we have a matrix $\Phi$ satisfying the RIP of order $S \ge 40 \log \left( \frac{4|E(\epsilon)|}{\rho} \right)$ and conditioning $\delta \le \frac{\delta_\manifold'}{4}$.
	Then, with probability exceeding $1 - \rho$, the matrix $\widehat{\Phi} := \Phi D_\xi \Psi$ is a (non-squared) stable embedding\footnote{
	Squared and non-squared stable embeddings differ only by a small constant in their conditioning.
	To be more concrete, suppose $C \subset \reals^N$. Then it can be shown that $\sup_{c \in C} \left| \|\widehat{\Phi} c\|_2 - 1 \right| \le \sup_{c \in C} \left| \|\widehat{\Phi} c\|_2^2 - 1 \right|$. Furthermore if $\sup_{c \in C} \left| \|\widehat{\Phi} c\|_2 - 1 \right| \le 1$, then it can be shown that $\sup_{c \in C} \left| \|\widehat{\Phi} c\|_2^2 - 1 \right| \le 3 \sup_{c \in C} \left| \|\widehat{\Phi} c\|_2 - 1 \right|$.
	} of $B$ with conditioning $\frac{9}{4} \delta_\manifold'$ (i.e., $\sup_{b \in U(B-B)} \left| \|\widehat{\Phi} b\|_2 - 1 \right| \le  \frac{9}{4} \delta_\manifold'$). 
\end{lemma}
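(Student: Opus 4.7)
The plan is to convert Theorem~\ref{thm:Ward} (a Johnson--Lindenstrauss bound for finite sets) into a uniform norm-preservation statement on the infinite set $U(B-B)$ via one probabilistic invocation plus two deterministic steps.

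First, I would apply Theorem~\ref{thm:Ward} directly to the finite set $E(\epsilon)$, with target conditioning $\delta_\manifold'$ and failure probability $\rho$. Its hypotheses---RIP order $S \ge 40\log(4|E(\epsilon)|/\rho)$ and conditioning $\delta \le \delta_\manifold'/4$---are exactly the assumptions placed on $\Phi$ in the lemma. Choosing the covering points to have unit norm (possible because $B_1, B_2 \subset \sphere^{N-1}$), we obtain with probability at least $1-\rho$ that $\sup_{e \in E(\epsilon)} \bigl|\|\widehat{\Phi} e\|_2^2 - 1\bigr| \le \delta_\manifold'$. The first inequality in the footnote (``$|\|\widehat{\Phi} c\|_2 - 1| \le |\|\widehat{\Phi} c\|_2^2 - 1|$'') upgrades this to $\sup_{e \in E(\epsilon)} \bigl|\|\widehat{\Phi} e\|_2 - 1\bigr| \le \delta_\manifold'$, which takes care of the discrete skeleton.

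Second, I would extend from $E(\epsilon)$ to $U(B-B)$ by a triangle inequality. Given $b \in U(B-B)$, pick $e \in E(\epsilon)$ with $\|b-e\|_2 \le \epsilon = \delta_\manifold'/(N+1)$; then
\[
\bigl|\|\widehat{\Phi} b\|_2 - 1\bigr| \;\le\; \bigl|\|\widehat{\Phi} e\|_2 - 1\bigr| + \|\widehat{\Phi}(b-e)\|_2 \;\le\; \delta_\manifold' + \|\widehat{\Phi}\|_{op}\,\epsilon,
\]
so it remains to show $\|\widehat{\Phi}\|_{op}\,\epsilon \le \tfrac{5}{4}\delta_\manifold'$. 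Since $D_\xi$ and $\Psi$ are unitary, $\|\widehat{\Phi}\|_{op} = \|\Phi\|_{op}$. Applying RIP to each standard basis vector (which is $1$-sparse) yields $\|\Phi e_i\|_2 \le \sqrt{1+\delta}$, and the $\ell^1$--$\ell^2$ inequality then furnishes the crude bound $\|\Phi\|_{op} \le \sqrt{N(1+\delta)}$. Substituting $\epsilon = \delta_\manifold'/(N+1)$,
\[
\|\widehat{\Phi}\|_{op}\,\epsilon \;\le\; \frac{\sqrt{N(1+\delta)}}{N+1}\,\delta_\manifold' \;\le\; \frac{\sqrt{1+\delta}}{2}\,\delta_\manifold' \;\le\; \frac{5}{4}\delta_\manifold',
\]
using $\sqrt{N}/(N+1) \le 1/2$ together with $\delta \le \delta_\manifold'/4 \le 1/9$. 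The triangle inequality then closes at $\tfrac{9}{4}\delta_\manifold'$, matching the claim.

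The only non-routine point, and thus the main obstacle, is the operator-norm bound on $\widehat{\Phi}$. RIP controls $\Phi$ only on sparse inputs, whereas $b-e$ is a generic vector in $\reals^N$. The workaround is to use the $1$-sparse case of RIP on each column and then the $\ell^1$--$\ell^2$ inequality, which forces a crude $\sqrt{N}$ loss. This is precisely why $\epsilon$ must be set at the scale $1/(N+1)$: the covering radius has to compensate for the $\sqrt{N}$ blowup in $\|\widehat{\Phi}\|_{op}$ so that the covering-error term becomes an $O(\delta_\manifold')$ perturbation rather than an $O(\sqrt{N}\,\delta_\manifold')$ disaster. Everything else is bookkeeping.
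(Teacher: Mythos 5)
Your proposal is correct and follows essentially the same route as the paper's proof: invoke Theorem~\ref{thm:Ward} on the finite cover $E(\epsilon)$, then extend to all of $U(B-B)$ via the triangle inequality, a crude operator-norm bound on $\Phi$ (the paper uses $\|\Phi\|_2 \le (N/S+1)(1+\delta)$ from an $S$-sparse block decomposition, you use $\sqrt{N(1+\delta)}$ from $1$-sparse columns and the $\ell^1$--$\ell^2$ inequality---both suffice), and the choice $\epsilon = \delta_\manifold'/(N+1)$ to absorb the polynomial-in-$N$ factor. Your bookkeeping in fact lands slightly inside the paper's $\frac{9}{4}\delta_\manifold'$ target, so the argument closes as claimed.
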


\begin{proof}[Proof of Lemma~\ref{lem:apply_Ward}]
	Fix $\rho < 1$ and $\delta_\manifold' \le \frac{4}{9}$. 
If $\Phi$ satisfies RIP-$(S,\delta)$ with $S \ge 40 \log \left( \frac{4|E(\epsilon)|}{\rho} \right)$ (with $\epsilon$ to be defined later) and $\delta \le \frac{\delta_\manifold'}{4}$ as assumed in Lemma~\ref{lem:apply_Ward}, then Theorem~\ref{thm:Ward} states that with probability exceeding $1 - \rho$, $\sup_{e \in E(\epsilon)} \left| \|\widehat{\Phi} e\|_2 - 1 \right| \le \sup_{e \in E(\epsilon)} \left| \|\widehat{\Phi} e\|_2^2 - 1 \right| \le \delta_\manifold'$. 
For a fixed $b \in U(B-B)$, find its nearest covering point $e \in E(\epsilon)$ such that $\|b - e\|_2 \le \epsilon$. 
Then, we have
\begin{eqnarray}
	\label{eq:upperbound_b}
	\|\widehat{\Phi} b\|_2 \le \|\widehat{\Phi} e\|_2 + \left\|\widehat{\Phi} (b - e) \right\|_2
	\le (1 + \delta_\manifold')+ \left\|\Phi\right\|_2 \left\| D_\xi \Psi (b - e)\right\|_2.
\end{eqnarray}
Now, it is easy to show that for a matrix $\Phi \in \comps^{M \times N}$ that satisfies RIP-($S,\delta$), $\|\Phi\|_2 \le \left( \frac{N}{S} + 1 \right)( 1+ \delta)$. 
Applying this fact to~\eqref{eq:upperbound_b}, we have
\begin{eqnarray*}
	\|\widehat{\Phi} b\|_2 \le (1 + \delta_\manifold')+ \left( \frac{N}{S} + 1 \right)( 1+ \delta) \epsilon
	\le (1 + \delta_\manifold')+ \left( N + 1 \right) \left(1+ \frac{\delta_\manifold'}{4}\right) \epsilon.
\end{eqnarray*}
To remove the catastrophic dependence on $(N+1)$, set $\epsilon = \frac{\delta_\manifold'}{N+1}$.
Using this choice of $\epsilon$, we have
\begin{eqnarray*}
	\|\widehat{\Phi} b\|_2
	\le (1 + \delta_\manifold')+ \left(1+ \frac{\delta_\manifold'}{4}\right) \delta_\manifold'
	\le 1 + 2 \delta_\manifold' + \frac{(\delta_\manifold')^2}{4}
	\le 1 + \frac{9}{4} \delta_\manifold',
\end{eqnarray*}
where we used the fact that $\delta_\manifold' \le \frac{4}{9} \le  1$.
Using the same steps for the lower bound, we obtain
\begin{eqnarray*}
	\|\widehat{\Phi} b\|_2
	\ge \|\widehat{\Phi} e\|_2 - \left\|\widehat{\Phi} (b - e) \right\|_2
	\ge (1 - \delta_\manifold') - \left( N + 1 \right) \left(1+ \frac{\delta_\manifold'}{4}\right) \epsilon 
	\ge 1 - \frac{9}{4} \delta_\manifold'. 	
\end{eqnarray*}
Since the upper and lower bounds coincide, and they are valid for all $b \in U(B-B)$, we arrive at our required conclusion. 
\end{proof}

\subsection{Synthesis}
\label{subsec:syn}

Finally, it remains to extend the stable embedding from $U(B - B)$ to $U(\manifold-\manifold)$. 
From Lemma~\ref{lem:cover}, for any $u \in U(\manifold-\manifold)$, we can find a $b \in U(B-B)$ such that $\|b - u\|_2 \le \epsilon(T)$ with $\epsilon(T) = 4 \sqrt{\frac{T}{\tau}}$. 
Using Lemma~\ref{lem:apply_Ward} (with $\rho$ fixed and $\delta_\manifold' \le \frac{4}{9}$ to be defined later), triangle inequalities, and the fact that $\|\Phi\|_2 \le \left( \frac{N}{S} + 1 \right)\left( 1+ \frac{\delta_\manifold'}{4}\right)$, we have
\begin{eqnarray}
	\|\widehat{\Phi} u\|_2 \le \|\widehat{\Phi} b\|_2 + \|\Phi\|_2 \|D_\xi \Psi (b - u)\|_2 
\le (1 + \frac{9}{4} \delta_\manifold') + \left(1 + \frac{\delta_\manifold'}{4} \right) \left( N + 1 \right) \epsilon(T).
	\label{eq:wphiu_upper}
\end{eqnarray}
Set $T$ such that $\epsilon(T) = \frac{\delta_\manifold'}{N+1}$.
By using the formula for $\epsilon(T)$, we have that $T = \frac{(\delta_\manifold')^2 \tau}{16(N+1)^2}$.
It is easy to check that $T \le \frac{3\tau}{8}$, which fulfills the condition of Lemma~\ref{lem:cover}.
Plugging this choice of $\epsilon(T)$ into~\eqref{eq:wphiu_upper}, we get
\begin{eqnarray*}
	\|\widehat{\Phi} u\|_2 \le
	(1 + \frac{9}{4} \delta_\manifold') + \left(1 + \frac{\delta_\manifold'}{4} \right) \delta_\manifold' \le 1 + \frac{7}{2}\delta_\manifold',
\end{eqnarray*}
where we used the fact that $\delta_\manifold' \le \frac{4}{9} < 1$.
For the lower conditioning bound, we use the same estimates to arrive at
\begin{eqnarray*}
	\|\widehat{\Phi} u\|_2 \ge \|\widehat{\Phi} b\|_2 - \|\Phi\|_2 \|D_\xi \Psi (b - u)\|_2
	\ge (1 - \frac{9}{4} \delta_\manifold') - \left(1 + \frac{\delta_\manifold'}{4} \right) \delta_\manifold' \ge 1 - \frac{7}{2} \delta_\manifold'.
\end{eqnarray*}
Since the upper and lower bounds coincide, we have via the squared and non-squared conditioning bounds 
\begin{eqnarray*}
	\sup_{u \in U(\manifold-\manifold)} \left| \|\widehat{\Phi} u\|_2^2 -1 \right| \le 3 \sup_{u \in U(\manifold-\manifold)} \left| \|\widehat{\Phi} u\|_2 -1 \right| \le \frac{21}{2}\delta_\manifold'. 
\end{eqnarray*}

It remains to do some bookkeeping.
First, given a predetermined stable manifold embedding conditioning $\delta_\manifold < 1$, set $\delta_\manifold' = \frac{2}{21} \delta_\manifold$.
It is clear that this choice of $\delta_\manifold'$ validates the assumption that $\delta_\manifold' \le \frac{4}{9}$ in Lemma~\ref{lem:apply_Ward}, and we have $\sup_{u \in U(\manifold-\manifold)} \left| \|\widehat{\Phi} u\|_2^2 -1 \right| \le \delta_\manifold$ which is what we are trying to prove.
Next, according to the JL lemma for RIP operators (Lemma~\ref{lem:apply_Ward}), the RIP conditioning for the matrix $\Phi$ needs to satisfy $\delta \le \frac{\delta_\manifold'}{4} = \frac{\delta_\manifold}{42}$.
This is the condition for the RIP conditioning in  Theorem~\ref{thm:main}.
Finally, according to the JL lemma for RIP operators (Lemma~\ref{lem:apply_Ward}), the RIP order needs to satisfy $S \ge 40 \log \left( \frac{4|E(\delta_\manifold'/(N+1))|}{\rho} \right)$.
For this, we need do some work.
First, using~\eqref{eq:size_of_E}, we have $\left|E \left(\frac{\delta_\manifold'}{N+1} \right)\right| = \left|E \left(\frac{2\delta_\manifold }{21(N+1)} \right) \right| \le 2|A|^2 \left(1 + \frac{21(N+1)}{\delta_\manifold} \right)^{2D+1}$. 
Now $|A|$ depends on the geodesic covering resolution $T$, which was set to be $T = \frac{(\delta_\manifold')^2 \tau}{16(N+1)^2} = \frac{\delta_\manifold^2 \tau}{1764 (N+1)^2}$. 
By Lemma~\ref{lem:covering_number}, which gives the geodesic number of a manifold with geodesic regularity $R$, we have
\begin{eqnarray*}
	\log\left(|A| \right) &\le& \log\left(\frac{\left(\frac{2R}{\sqrt{\pi}} \right)^D \left(\sqrt{{D}/{2} + 1} \right)^D V}{T^D}\right)  \\
	&=& \log\left(\frac{\left(\frac{3528 R}{\sqrt{\pi}} \right)^D \left(\sqrt{{D}/{2} + 1} \right)^{D} (N+1)^{2D} V}{\delta_\manifold^{2D} \tau^D} \right) \\
	&=& D \log \left(\frac{3528 R \left(\sqrt{D/2 + 1} \right) (N+1)^2}{\sqrt{\pi} {\delta_\manifold^2} \tau} \right) + \log(V).
\end{eqnarray*}
Putting everything together, the order $S$ of the RIP of the matrix $\Phi$ must satisfy
\begin{eqnarray*}
	S \ge 40 \left( 2D \log \left(\frac{3528 R \left(\sqrt{D/2 + 1} \right) (N+1)^2}{\sqrt{\pi} {\delta_\manifold^2} \tau} \right)  + (2D+1)\log\left(1 + \frac{21(N+1)}{\delta_\manifold} \right) + \log \left(\frac{8 V^2}{\rho}\right) \right). 	
\end{eqnarray*}
This concludes the proof of Theorem~\ref{thm:main}.

\end{document}